\theoremstyle{definition}
\newtheorem{assumption}{Assumption}
\newtheorem{definition}{Definition}
\theoremstyle{plain}
\newtheorem{thm}{Theorem}
\newtheorem{lemma}{Lemma}
\theoremstyle{remark}
\newtheorem{remark}{Remark}
\title{\textbf{Epsilon-Minimax Solutions\\of Statistical Decision Problems}\footnote{We would like to thank Karun Adusumilli, Isaiah Andrews, Tim Armstrong, Kevin Chen, Paul Delatte, Giannis Fikioris, Patrik Guggenberger, Kei Hirano, Nicole Immorlica, Michal Koles\'ar, Yoav Kolumbus, Lihua Lei, Charles Manski, Francesca Molinari, Ulrich M\"uller, Guillaume Pouliot, Brenda Quesada Prallon, Anant Shah, David Shmoys, Karthik Sridharan, Sophie Sun, Yiwei Sun, Keita Sunada, Vasilis Syrgkanis, Eva Tardos, Alex Torgovitsky, Davide Viviano, audiences at Princeton, the JER/SNSF 2025 workshop in econometrics, the Southern Economic Association 2025 Annual Meeting, and the Canadian Econometrics Study Group 40$^{th}$ Meeting, as well as four anonymous referees and the audience at the Twenty-Sixth ACM Conference on Economics and Computation (EC'25) for helpful feedback, comments, and suggestions. We would also like to thank Rohit Kumar for excellent research assistance. We gratefully acknowledge financial support from the NSF under grants SES-2315600, 2229012, 2312204, and 2403007 and from the Department of Defense through the Air Force Office of Scientific Research under award numbers FA9550-20-1-0397 and ONR 1398311.}}
\author{Andr\'es Aradillas Fern\'andez\thanks{Department of Economics, Massachusetts Institute of Technology.} \and Jos\'e Blanchet\thanks{Management Science and Engineering Department, Stanford University.} \and Jos\'e Luis Montiel Olea\thanks{Department of Economics, Cornell University.} \and Chen Qiu\footnotemark[4] \and J\"org Stoye\footnotemark[4] \and Lezhi Tan\footnotemark[3] 
}
\date{}
\begin{document}
\maketitle

\onehalfspacing

\begin{abstract}
A decision rule is \emph{$\epsilon$-minimax} if it is minimax up to an additive factor $\epsilon$. We present an algorithm for provably obtaining $\epsilon$-minimax solutions for a class of statistical decision problems. In particular, we are interested in problems where the statistician chooses randomly among $I$ decision rules. The minimax solution of these problems admits a convex programming representation over the $(I-1)$-simplex. Our suggested algorithm is a well-known  \emph{mirror subgradient descent} routine, designed to approximately solve the convex optimization problem that defines the minimax decision rule. This iterative routine is known in the computer science literature as the \emph{hedge algorithm} and is used in algorithmic game theory as a practical tool to find approximate solutions of two-person zero-sum games. We apply the suggested algorithm to different minimax problems in the econometrics literature. An empirical application to the problem of optimally selecting sites to maximize the external validity of an experimental policy evaluation illustrates the usefulness of the suggested procedure. 
\end{abstract}

\doublespacing

\section{Introduction}

Under \cite{Wald50}'s \emph{minimax} criterion different statistical decision rules are ranked based on their worst possible expected loss. Searching for a \emph{minimax-optimal} decision rule---i.e., a rule with the smallest worst-case expected loss---is computationally challenging. It is known that obtaining the minimax solution of a decision problem---and sometimes even deciding whether a minimax solution exists---is NP-hard in general \citep{du1995minimax, daskalakis2021complexity}.

In this paper, we consider a particular class of decision problems in which the decision maker is restricted to choose from a menu of $I$ available decision rules, all of which are assumed to have risk between zero and a known positive constant $M$. Our motivation is that, while it is always theoretically interesting to look for the \emph{best} overall decision rule, there are situations in which it is equally desirable to  \emph{``evaluate the performance of relatively simple statistical decision functions that researchers use in practice''} \citep{manski2024comprehensive} and choose optimally among them. When we allow the decision maker to choose randomly among $I$ options, the corresponding minimax problem can be viewed as a nonlinear convex optimization problem over the $(I-1)$-dimensional simplex.\citep{chamberlain2000econometric} 
% The connection to convex programming is helpful, but is not a computational panacea: evaluating the objective function of the convex program associated to the minimax problem could remain computationally costly.\footnote{As we will explain later, the objective function in the convex program is the worst-case expected loss of any given random selection over the $I$ decision rules; which, as discussed above requires optimizing a nonlinear function over the model's parameter space.} For instance, if one were to rely on a textbook \emph{mirror descent} routine for convex problems in the simplex \cite[Chapter 4.3]{nemirovski1983problem,bubeck2015convex}, one would typically need to make infinitely many evaluations of the objective function (and its subgradient) to guarantee that a minimax solution has been found. If one is willing to measure the complexity of an iterative \emph{first-order }algorithm for convex optimization by the number of calls (or \emph{queries}) that the procedure makes to the objective function and its subgradient---as it is done in the popular oracle model of optimization complexity of \cite{nemirovski1983problem}---it is thus desirable to search for iterative algorithms that require as fewer queries as possible. 

We show that it is possible to make substantial progress in solving our class of statistical decision problems if, instead of insisting in finding an \emph{exact} minimax solution, we make our goal to find an \emph{approximate} minimax solution. In particular, we search for \emph{a rule that attains the smallest worst-case expected loss, but up to a given additive factor $\epsilon$}. The statistical decision theory literature refers to such a rule as an \emph{$\epsilon$-minimax optimal decision rule} \citep[Chapter 1, Definition 4, p.33]{Ferguson67}. 

We show that we can \emph{provably} obtain an $\epsilon$-minimax rule by using a \emph{mirror subgradient descent routine} for convex optimization (Theorem 1). The methods of mirror descent \citep[Chapter 3]{nemirovski1983problem} are a family of iterative procedures recommended in the optimization literature for approximately solving convex problems of high dimensions. These methods are iterative, first-order optimization algorithms, in that they require repeated evaluations of the objective function and its subgradient but do not exploit any further smoothness information about the objective function.  

We present an explicit upper bound on the number of evaluations of the objective function and its subgradient required by our suggested algorithm. In particular, we show that it suffices to stop the mirror subgradient descent routine after $T = \lceil2 M^2\ln(I) / \epsilon^2 \rceil$ iterations.\footnote{$\lceil \cdot \rceil$ is the ceiling function: the function that returns the smallest integer that is greater than or equal to a given number.} We use results in \cite{ben2001ordered} to argue that the smallest number of iterations required by \emph{any} iterative, first-order algorithm for finding an $\epsilon$-minimax rule is $O(1)M^2/ \epsilon^2$, provided $\epsilon \geq M / \sqrt{I}$. Thus, there is a sense in which the recommended algorithm, and the suggested number of iterations, achieve the optimal dependence on $M$ and $\epsilon$, up to the logarithmic factor $\ln(I)$. 

The algorithm herein suggested is known in the computer science literature as the \emph{Hedge algorithm} (a particular case of the \emph{Multiplicative Weights} update method); see Section 2.1 in \cite{arora2012multiplicative}. This method is used in problems where a decision maker chooses randomly among $I$ alternatives repeatedly (an \emph{online decision-making} problem), but after each round he obtains a payoff for all of the $I$ available actions. The Hedge algorithm is commonly used in algorithmic game theory as a practical tool to find approximate solutions of two-person zero-sum games. To the best of our knowledge, the use of the Hedge algorithm in statistical decision problems is  novel.  This is rather surprising in light of the straightforward connection between statistical decision problems and two-person zero-sum games, and the origins of Multiplicative Weights in iterative dynamics for game play---see the notion of $\kappa$-exponential fictitious play in \cite{fudenberg1995consistency} and the references to the work of \cite{blume1993statistical} therein.\footnote{  \cite{freund1999adaptive} use the Hedge algorithm to approximately solve the mixed extension of two-person zero-sum games where both players have finitely many pure strategies. However, for games in which one player has infinitely many pure strategies, some other algorithms have been suggested in the literature; see, for example, \cite{filar1982algorithm} and our discussion of related literature below.} 

Although our suggested algorithm is designed to obtain $\epsilon$-minimax solutions of statistical decision problems, we show that the algorithm's output can also be used to provably construct an \emph{$\epsilon$-maximin} solution; see Remark \ref{remark:LFD}. Moreover, it is straightforward to use our algorithm to directly find $\epsilon$-maximin solutions to statistical decision problems in which the parameter space is finite: instead of doing mirror descent, we simply do mirror ascent based on the maximin problem. Thus, mirror ascent provides a natural, and theoretically grounded alternative, to the sequential quadratic programming algorithm used by \cite{chamberlain2000econometric} to find approximate least-favorable distributions in maximin problems.

We illustrate the usefulness of our suggested algorithm by analyzing a simple and stylized \emph{binary treatment choice problem with partial identification} based on the work of \cite{stoye2012minimax}. We use this example to compare the output of our algorithm with known exact solutions. We consider two types of minimax problems: minimizing worst-case regret and minimizing worst-case Bayes risk using the class of priors in \cite{giacomini2021robust}.

Finally, we present an empirical application to the problem of optimally selecting \emph{sites} to maximize the external validity of an experimental policy evaluation. This \emph{site selection problem} has been recently introduced in the work of \cite{gechter2023site} and \cite{egamidesigning}. Broadly speaking, a policy maker wishes to experimentally evaluate the effects of a new policy with the end goal of recommending its implementation on a set of different \emph{sites}. There are two types of sites: \emph{policy-relevant} and \emph{experimental} sites. There are also covariates $X_s \in \mathbb{R}^d$ available for each site. The \emph{site selection problem} asks the following question: if the policy maker can pick at most $k$ experimental sites, what are the sites that optimize external validity? Our approach provides an algorithm for deciding how to \emph{randomly} select sites to approximately optimize external validity, taking into account information about baseline covariates. When the policy maker is restricted to select only one site for experimentation, the output of our algorithm is a selection probability for each of the sites available for experimentation. Our application has two main messages. First, choosing uniformly at random where to experiment does not tend to be $\epsilon$-minimax optimal. Instead, the $\epsilon$-minimax solution adjusts the probability of sampling a site based on its baseline covariates. Second, there are cases---for example, when one experimental site is closest to each of the policy-relevant sites---in which the $\epsilon$-minimax solution places almost probability one on such most \emph{representative} site.

{\scshape Related Literature:} Different algorithms have been suggested for approximating the solutions of minimax problems like the ones considered in this paper. Some classical references include \cite{troutt1978algorithms}, \cite{ filar1982algorithm}, \cite{kempthorne1987numerical}, \cite{chamberlain2000econometric}, and \cite{elliott2015nearly}. More recently, \cite{guggenberger2025numerical} have shown that it is possible to obtain numerical approximations to minimax regret treatment rules in certain treatment choice problems by using a \emph{fictitious play} algorithm. One important difference between our work and this existing literature is that---once a desired approximation error $\epsilon$ has been selected, and once the bound $M$ on the risk function has been obtained---there are no further inputs that the user needs to specify in order to run the algorithm. This means that we are explicit about the number of iterations, step size, and also the initial condition. Importantly, we are able to guarantee that, upon termination after our suggested number of rounds, the algorithm provably generates an $\epsilon$-minimax rule---in the sense of \cite{Ferguson67}---provided our assumptions are satisfied. As discussed before, there is also a sense in which our algorithm is, up to a logarithmic term, as good as any other iterative, first-order algorithm.  

Relatedly, there is also recent interest in approximating the solution of minimax problems in which the strategies for both the statistician and nature are parameterized via neural networks, with weights that are updated iteratively using versions of what is called \emph{subgradient ascent-descent}; see the recent work of \cite{luedtke2020learning} and also \cite{luedtke2021adversarial}. These algorithms where two players use subgradient descent are similar to the approaches used when optimizing Generative Adversarial Networks (GANs); see, for example, \cite{kaji2023adversarial}. These subgradient ascent-descent algorithms are also commonly used to approximate the equilibrium of two-person zero-sum games by invoking simultaneous no-regret dynamics; see, for example, Section 3.1 in \cite{lewis2018adversarial} and the references therein. Convergence rates for these subgradient ascent-descent algorithms, as well some performance guarantees for a finite number of iterations,  are available under some conditions. It is known, however, that the (approximate) stationary points of these gradient ascent-descent algorithms are not necessarily $\epsilon$-minimax strategies. Instead, they are close to what the literature refers to as \emph{local min-max} solutions; see the seminal work of \cite{daskalakis2021complexity}. As we discuss in the conclusion, it would be interesting to further explore the differences between $\epsilon$-minimax strategies and the notion of a local min-max point.

Finally, we emphasize again that we have decided to focus on decision problems in which the decision maker chooses randomly from $I$ available decision rules. While we think there are cases (as in our applications) in which such set of finitely many decision rules can be chosen naturally, there are precedents in the literature that view such finite set as arising from a discretization of the space of decision rules; see, for example \cite{ghosh1964uniform} and the references therein.\footnote{For instance, Section 5 in \cite{ghosh1964uniform} gives an illustration of the construction of one such finite set of decision rules when estimating a bounded mean of a normal distribution under quadratic loss.} While there are cases in which the error from the discretization can be derived explicitly, we think whenever the problem of interest has infinitely many decision rules it is more natural to use the $\epsilon$-minimax/maximin solutions to provide upper/lower bounds on the minimax value of the problem of interest; see Remark \ref{remark:infiniteD}.   

% A similar notion of approximate minimax solutions has existed in the literature at least as far back as \cite{ghosh1964uniform}. A uniform approximation of a minimax point estimate is accomplished using a finite basis of $N$ vectors for a subspace of the risk functions, with the approximation constructed in such a way that it ensures convergence in maximum risk to the true minimax estimate. Equation 40, p. 1040, precisely defines in what sense their approximation is an $\epsilon$-minimax solution, with a linear combination of the risk function basis vectors serving as the estimate. In a similar vein, our use of a finite number of decision rules can be viewed as acting as a basis for a subspace of a larger decision rule space. Section 5 in \cite{ghosh1964uniform} gives an illustration of the construction of one such basis when estimating a bounded mean of a normal distribution. The odd functions $z, z^3, z^5, ...$, are then used to form a basis for the space inhabited by the risk functions.

{\scshape Outline:} The rest of the paper is organized as follows. Section 2 introduces notation, main assumptions, and presents the convex programming representation of the minimax problems analyzed herein. Section 3 defines an $\epsilon$-minimax decision rule and presents the algorithm. Section 4 applies the algorithm to two illustrative examples that involve solving treatment choice problems with partial identification. Our algorithm is then used to solve for $\epsilon$-minimax (regret) optimal rules; but we also argue that it can be applied to solve other minimax problems, such as (ex-ante) Robust Bayes analysis with the priors suggested by \cite{giacomini2021robust}. Section 5 presents the main application. Section 6 discusses some extensions. Section 7 concludes.  
 
\section{Minimax Problems}

\subsection{Notation}
A decision maker must choose an action $a$ that belongs to some set $\mathcal{A}$. Prior to choosing the action, he observes data: the realization of a random variable $X$ taking values in a set $\mathcal{X}$. A data-driven choice of action is summarized by a \emph{decision rule}: a mapping from data to actions, which is herein denoted by the function $d:\mathcal{X} \rightarrow \mathcal{A}$.

We restrict our analysis to the case in which the decision maker only considers $I$ decision rules that belong to the finite set $\mathcal{D} \equiv \{d_1, \ldots, d_{I}\}$. These rules can be nonrandomized or randomized, in the sense of \cite{Ferguson67} pp. 24-25. An important aspect of our analysis is that we allow the decision maker to choose randomly from the set of decision rules $\mathcal{D}$ and we represent such a random choice by an element in the $I-1$ simplex:  
\[ \Delta(\mathcal{D}) \equiv \left\{(p_1,...,p_I) \in \mathbb{R}^{I} \: \Bigg| \: \sum_{i=1}^I p_i = 1, p_i \geq 0\right\}.\]
It is well known that allowing the decision maker to choose randomly is usually to his advantage.\footnote{Consider a ``matching pennies'' game with two players, each with two actions: left and right. Suppose that column player gets $M$ when matched and $-M$ when unmatched. If neither player is allowed to choose actions randomly, the worst-case payoff obtained by the column player is $-M$ regardless of the action chosen. If the column player can randomize, but the row player cannot, the worst-case payoff for the column player if he chooses each action at random with probability 1/2 is zero.} Moreover, there are two additional reasons why we would like to allow for the possibility of \emph{randomization}. The first one is that in the main application we will consider in the paper (the \emph{site selection} problem described in Section \ref{sec:application}), the random choice of experimental sites is viewed as the default practice in applied work. The second reason is that, as we will explain in Section \ref{sec:approximate_solutions} (Remark \ref{remark:brute_force_evaluation}), allowing for random choice of actions can reduce the computational burden of selecting a good decision rule. 

A \emph{risk function} is used to summarize the performance of each decision rule $d_i \in \mathcal{D}$. This performance is contingent on the data generating process, which we parameterize by an element $\theta$ belonging to some space $\Theta$. Thus, we write the risk function of each decision rule $d \in \mathcal{D}$ as a mapping $R:\mathcal{D} \times \Theta \rightarrow \mathbb{R}$. We refer to $\theta$ as a parameter, and to $\Theta$ as the parameter space. We are particularly interested in the case in which $\Theta$ is an infinite set; for example, when $\Theta$ equals all of $\mathbb{R}^{d}$. We also want to allow for the possibility that each element in the parameter space is an infinite dimensional object (for example, when $\theta$ itself is a function). We impose the following assumption on the risk function:

\begin{assumption} \label{asn:A1_Risk}
    There exists a known constant $0<M<\infty$ such that for any $d \in \mathcal{D}$ and $\theta \in \Theta$, $0 \leq R(d,\theta) \leq M$.
\end{assumption}

In Section \ref{sec:illustrations} we explain how this assumption can be verified for each of the illustrative examples we consider. We view Assumption \ref{asn:A1_Risk} as a minimal regularity condition for the minimax problem to be well-behaved. We also note that the assumption holds if each of the $I$ decision rules under consideration has a finite worst-case risk. 

In a slight abuse of notation, we extend the original domain of the risk function---which was defined over decision rules in $\mathcal{D}$---to all possible random selections in $\Delta(\mathcal{D})$. We do this by defining, for any $p \in \Delta(\mathcal{D})$ and $\theta \in \Theta$, the function:  
\begin{equation} \label{eqn:risk_function}
R(p,\theta) \equiv \sum_{i=1}^I p_i R(d_i,\theta).
\end{equation}

We view a decision problem as a triplet $(\mathcal{D},\Theta,R(\cdot,\cdot))$ and we define the \emph{minimax value} of the decision problem as the scalar
\begin{equation} \label{eqn:minimax_value}
\Bar{v} \equiv \inf_{p \in \Delta(\mathcal{D})} \sup_{\theta \in \Theta} R(p,\theta).
\end{equation}
A random selection $p^* \in \Delta(\mathcal{D})$ is said to be a \emph{minimax} decision rule if
\begin{equation} \label{eqn:minimax_rule}
\sup_{\theta \in \Theta} R(p^\star,\theta) = \Bar{v}.
\end{equation}
The use of the minimax criterion as a solution concept in statistical decision problems is traditional, dating back to \cite{Wald50}. 
%An axiomatic foundation of the minimax criterion in decision problems under uncertainty is presented in \cite{gilboa1989maxmin}. 
\cite{manski2021econometrics} argues that the primary challenge to use the minimax criterion and \cite{Wald50}'s statistical decision theory is computational.

\subsection{Minimax solutions via convex programming} \label{subsection:convex_programming}
We first show that the minimax solution of the decision problems considered in this paper can be computed via convex programming. This observation is based on an analogous result in \cite{chamberlain2000econometric}; see Equation 5, p. 630, and the discussion therein. The argument is as follows. For $p \in \Delta(\mathcal{D})$, define the nonlinear function 
\begin{equation} \label{eqn:f_convex}
f(p) \equiv \sup_{\theta \in \Theta} R(p,\theta).   
\end{equation}
This function is the upper envelope---over all possible values in the parameter space---of the risk of $p$.

\begin{lemma} \label{lemma:Lemma_convex} Suppose Assumption \ref{asn:A1_Risk} holds. The function $f: \Delta(\mathcal{D}) \rightarrow \mathbb{R}$ is convex and Lipschitz continuous w.r.t. $\|\cdot\|_1$ (with constant at most $M$). Furthermore, fix an arbitrary $p_0 \in \Delta(\mathcal{D})$. If there exists $\theta_0 \in \Theta$ such that $R(p_0,\theta_0) = f(p_0)$, then the vector $g_0$ in $\mathbb{R}^{I}$ given by
\begin{equation} \label{eqn:subgradient} 
g_0 \equiv (R(d_1,\theta_0), \ldots, R(d_I,\theta_0))^{\top}.    
\end{equation}
is a subgradient of $f$ at $p_0$.\footnote{If $f: \Delta(\mathcal{D}) \rightarrow \mathbb{R}$ is convex, a vector $g_0$ is said to be a subgradient of $f$ at a
point $p_0$ if $f(p)\geq f(p_0) + g_0^{\top}(p - p_0), \forall p \in \Delta(\mathcal{D})$. See pp. \cite{rockafellarconvex} 214-215. }
\end{lemma}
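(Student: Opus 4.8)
The plan is to establish the three assertions separately, exploiting throughout that for each fixed $\theta$ the map $p \mapsto R(p,\theta) = \sum_{i=1}^I p_i R(d_i,\theta)$ is affine (indeed linear) in $p$. Convexity of $f$ is then immediate: I would invoke the standard fact that a pointwise supremum of affine (hence convex) functions is convex, applied to $f(p) = \sup_{\theta \in \Theta} R(p,\theta)$. Assumption \ref{asn:A1_Risk} guarantees $0 \le f(p) \le M$, so the supremum is finite and $f$ genuinely maps into $\mathbb{R}$ rather than taking the value $+\infty$.

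For Lipschitz continuity with respect to $\|\cdot\|_1$, the first step is to bound the difference pointwise in $\theta$: for $p,q \in \Delta(\mathcal{D})$,
\[ |R(p,\theta) - R(q,\theta)| = \Big| \sum_{i=1}^I (p_i - q_i) R(d_i,\theta) \Big| \le M \sum_{i=1}^I |p_i - q_i| = M \|p-q\|_1, \]
where the inequality uses $0 \le R(d_i,\theta) \le M$ from Assumption \ref{asn:A1_Risk}. The second step is to pass this uniform bound through the supremum via the elementary inequality $|\sup_\theta A(\theta) - \sup_\theta B(\theta)| \le \sup_\theta |A(\theta) - B(\theta)|$ (valid because both suprema are finite), which yields $|f(p) - f(q)| \le M\|p - q\|_1$.

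For the subgradient claim, the key observation is that $g_0^\top p = \sum_{i=1}^I p_i R(d_i,\theta_0) = R(p,\theta_0)$ for every $p \in \Delta(\mathcal{D})$; that is, $g_0$ is exactly the (constant) gradient in $p$ of the affine function $R(\cdot,\theta_0)$. Using the hypothesis $R(p_0,\theta_0) = f(p_0)$, I would then compute
\[ f(p_0) + g_0^\top (p - p_0) = R(p_0,\theta_0) + \big(R(p,\theta_0) - R(p_0,\theta_0)\big) = R(p,\theta_0) \le \sup_{\theta \in \Theta} R(p,\theta) = f(p), \]
which is precisely the subgradient inequality $f(p) \ge f(p_0) + g_0^\top(p-p_0)$ for all $p \in \Delta(\mathcal{D})$. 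Geometrically, the affine function $R(\cdot,\theta_0)$ is a supporting minorant of $f$ that touches $f$ at $p_0$.

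I do not anticipate a genuine obstacle: all three parts are short and rely only on affineness in $p$ together with the uniform bound from Assumption \ref{asn:A1_Risk}. The only points requiring a little care are (i) confirming finiteness of the supremum, so that the sup-difference inequality applies and $f$ is $\mathbb{R}$-valued, and (ii) noting that the subgradient inequality need only hold over the domain $\Delta(\mathcal{D})$ (as in the footnote's definition), which is exactly what the computation delivers—so I need not worry about the behavior of $R(\cdot,\theta_0)$ outside the simplex, nor about the \emph{existence} of a maximizing $\theta_0$, which is assumed as a hypothesis rather than something to be proved.
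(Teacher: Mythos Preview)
Your proposal is correct and follows essentially the same route as the paper: convexity via the sup-of-affine-is-convex fact, Lipschitz continuity via the uniform bound $|R(p,\theta)-R(q,\theta)|\le M\|p-q\|_1$ passed through the supremum, and the subgradient inequality via $f(p_0)+g_0^\top(p-p_0)=R(p,\theta_0)\le f(p)$. The only differences are cosmetic---the paper spells out the convexity inequality directly rather than citing the sup-of-convex fact, and its Lipschitz step is written slightly more tersely---so there is nothing substantive to add.
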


\begin{proof}
The convexity of $f(\cdot)$ follows from \cite{chamberlain2000econometric}. The Lipschitz continuity follows from the definition of $f(\cdot)$. We provide a detailed proof in Appendix \ref{subsection:Proof_Lemma_convex}.
\end{proof}
Lemma \ref{lemma:Lemma_convex} shows that solving the minimax problem in \eqref{eqn:minimax_value} can be viewed as a nonlinear convex program over the $(I-1)$ simplex. We note that the connection to convex programming is helpful, but should not be viewed as a computational panacea. Evaluating the objective function of the convex program and its subgradient could remain computationally costly. 

To make sure that the subgradient in \eqref{eqn:subgradient} is well defined, we make the following assumption. 

\begin{assumption}  \label{asn:A2_oracle} 
For any $p \in \Delta(\mathcal{D})$, there exists $\theta_p \in \Theta$ such that
\[\sum_{i=1}^I p_i R(d_i,\theta_p) = \sup_{\theta \in \Theta} \sum_{i=1}^I p_i R(d_i,\theta).\]
\end{assumption}

The assumption says that for any $p \in \Delta(\mathcal{D})$ it is possible to find an element $\theta_{p}$ such that $R(p,\theta_p)=f(p)$. This means that there is an algorithm that is capable to i) evaluate the function $f(p)$ and to ii) find a maximizer that evaluates to $f(p)$. 

%\textcolor{red}{Furthermore, this also implies that $f(p)$ is continuous and thus can obtain its minimum over $\Delta(\mathcal{D})$. \begin{lemma} \label{lemma:Lemma_continuous} If assumption~\ref{asn:A2_oracle} holds, the function $f: \Delta(\mathcal{D})$ is continuous. Moreover, there exists an optimal $p^*$ such that \[ f(p^*) = \min_{p \in \Delta(\mathcal{D})} f(p).\] \end{lemma} }

Assumption \ref{asn:A2_oracle} requires that the worst-case risk is attained and that there is an algorithm to find $\theta_{p}$. Later we discuss the extent to which Assumption \ref{asn:A2_oracle}  can be relaxed, by, for example, only requiring that we can get a $\delta$-approximation to $f(p)$. See Remark \ref{remark:relax_A3}. We note, however, that Assumption  \ref{asn:A2_oracle} is crucial for our algorithm and should not be taken for granted. Solving the optimization problem over $\theta$ could be computationally challenging depending on the structure of the risk function. Assumptions similar to our Assumption \ref{asn:A2_oracle} are not uncommon in the optimization literature and are usually presented as the existence of an oracle that can evaluate the objective function of interest (in our case, the worst-case risk of a particular $p \in \Delta(\mathcal{D})$).

\section{Approximate Solutions for Minimax Problems} \label{sec:approximate_solutions}

A popular algorithmic approach for \emph{approximately} solving convex optimization problems (in particular, those of high dimensions) is to use the methods of mirror descent of  \cite{nemirovski1983problem}. It is known that the rate of convergence of mirror descent for convex problems in the simplex (as the one associated to our minimax problems) improves over regular subgradient descent \citep[Section 4.3]{bubeck2015convex}. 

This section starts out by presenting a formal definition of an \emph{approximate} minimax solution. Then, this section presents an off-the-shelf implementation of mirror subgradient descent \citep[Section 4.3, p. 301]{bubeck2015convex} that provably finds such an approximate solution (see Algorithm \ref{alg:mirror_descent}).

\subsection{$\epsilon$-Minimax Decision Rules} 

\begin{definition} \label{definition:epsilon_minimax}
[\cite{Ferguson67}, p. 33] A random selection $p_\epsilon^\star \in \Delta(\mathcal{D})$ is an \emph{$\epsilon$-minimax} decision rule for the decision problem $(\mathcal{D},\Theta,R(\cdot,\cdot))$ if
    \[\sup_{\theta \in \Theta} R(p_\epsilon^\star,\theta) \leq \inf_{p \in \Delta(\mathcal{D})} \sup_{\theta \in \Theta} R(p,\theta) + \epsilon = \Bar{v} + \epsilon.\]
\end{definition}
We note that the risk of an $\epsilon$-minimax decision rule is smaller---up to an additive factor of size $\epsilon$---than the worst-case risk of any other decision rule. That is:
    \[R(p_\epsilon^\star,\theta) \leq \sup_{\theta \in \Theta} R(p,\theta) + \epsilon, \: \forall \theta \in \Theta, \: \forall p \in \Delta(\mathcal{D}).\]
The definition of a minimax decision rule further implies that
    \[\Bar{v} \leq \sup_{\theta \in \Theta} R(p_\epsilon^\star,\theta) \leq \Bar{v} + \epsilon.\]

\subsection{Mirror Subgradient Descent for finding $\epsilon$-Minimax Rules} 

We now show that a mirror subgradient descent for convex optimization can provably find an $\epsilon$-minimax solution. 

The pseudocode below describes a mirror subgradient descent routine for finding the minimum of \eqref{eqn:f_convex} over the simplex $\Delta(D)$.\footnote{The routine is taken from \cite{bubeck2015convex} (Section 4.3, p. 301), where the mirror map is chosen to be the negative entropy $\phi(x) = \sum_{i=1}^{n} x_i \log x_i$, the routine $\nabla \phi(x_{t+1}) = \nabla \phi(x_t) - \eta \nabla f(x_t)$ becomes $x_{t+1, i} = x_{t, i} \exp(- \eta \nabla f(x_t)_i)$. We simply adjust the notation to our problem.} 

\begin{algorithm}[h!] 
\caption{Mirror Subgradient Descent, stopped after $T$ epochs.}
\label{alg:mirror_descent}
\begin{algorithmic}[1]
\STATE \textbf{Input:} Step-size $\eta > 0$; and number of epochs $T \in \mathbb{N}$. 
\STATE Initialize $w_0 \in \mathbb{R}^I$ by setting $w_{i,0} = 1$ for all $i \in \{1, \ldots, I\}.$
\FOR{$t = 1, 2, \ldots$}
    \STATE Compute $\phi_t  \equiv \sum_{i=1}^I w_{i,t-1}$
    \STATE For each $i \in \{1, \ldots, I\}$, compute
    \[
    p_{i,t} \equiv \frac{w_{i,t-1}}{\phi_t}
    \]
    \STATE Find $\theta_t \in \Theta$ such that
    \[
    \theta_t \equiv \arg\sup_{\theta \in \Theta} \sum_{i=1}^I p_{i,t} R(d_i, \theta)
    \]
    \STATE Define the vector
    \[
    g_{t} \equiv (R(d_1, \theta_t), \ldots, R(d_I, \theta_t))^\top
    \]
    \STATE Consider the multiplicative weights update:
    \[
    w_{i,t} \equiv w_{i,t-1} \cdot \exp(-\eta \cdot g_{i,t})
    \]
\ENDFOR
\STATE \textbf{Output:} $\frac{1}{T} \sum_{t=1}^{T} p_t.$
\end{algorithmic}
\end{algorithm}

% Initialize $w_0 \in \mathbb{R}^I$ to be the vector that contains ones in all of its entries. We will denote such vector by $\mathbf{1}$. Fix a step-size $\eta>0$. For every $t \in \mathbb{N}$:
% \begin{enumerate}
%     \item Set
%     \begin{align*}
%         \phi_t &\equiv \mathbf{1}^\top w_{t-1}\\
%         &= \sum_{i=1}^I w_{i,t-1}.
%     \end{align*}

%     \item Obtain a point $p_{t} \in \Delta(\mathcal{D}$) by computing for each $=1,\ldots, I$:  
%     \[p_{i,t} \equiv \frac{w_{i,t-1}}{\phi_t}.\]

%     \item Given $p_{t} = (p_{1,t},...,p_{I,t})^{\top} \in \Delta(\mathcal{D})$, find $\theta_t \in \Theta$ such that
%     \[\sum_{i=1}^I p_{i,t} R(d_i,\theta_t) = \sup_{\theta \in \Theta} \sum_{i=1}^I p_{i,t} R(d_i,\theta).\]
%     Such a point in the parameter space exists by Assumption \ref{asn:A2_oracle}.

%     \item Define the vector
%     \[ g_{t} \equiv (R(d_1,\theta_t), \ldots, R(d_{I},\theta_t))^\top.  \]
%     By Lemma \ref{lemma:Lemma_convex}, this vector is a subgradient of $f$ at $p_{t}$ if Assumption \ref{asn:A2_oracle} holds. 
%     \item Update the weights $w_{t-1}$ with the coordinate-by-coordinate multiplicative rule:
%     \[w_{i,t} \equiv w_{i,t-1} \cdot \exp\left( -\eta \:  g_{i,t} \right).\]

% \end{enumerate}
% Under this mirror descent routine, the vector $p_{t}$ gets updated as $p_{i,t+1} = w_{i,t}/\phi_{t+1}$. 

As shown in Lemma \ref{lemma:Lemma_convex}, the gradient vector $g_{t}$ collects the risk associated to each decision rule at $\theta_t$ (the point in the parameter space associated to the worst-case performance of $p_t$). The mirror descent update in Algorithm \ref{alg:mirror_descent} is intuitive: decision rules with high risk at $\theta_t$ are used less frequently in the following round. Algorithm \ref{alg:mirror_descent} assumes that the subgradient, $g_t$, is known. However, there are versions of the algorithm that replace $g_t$ by an unbiased estimator; see our Remark \ref{remark:SMD} about \emph{stochastic mirror descent}  and Chapter 6 in \cite{bubeck2015convex}.   

The mirror subgradient routine in Algorithm \ref{alg:mirror_descent} is also known in the computer science literature as the \emph{Hedge Algorithm} (a particular case of the \emph{Multiplicative Weights} update method).\footnote{The Multiplicative Weights update method is a popular algorithm in computer science that has found different applications in machine learning; see \cite{arora2012multiplicative}. The specific version of the Multiplicative Weights algorithm used in this paper uses an exponential function of each of the coordinates of the gradient to update the weights and is known as the Hedge algorithm. See Section 2.1 in \cite{arora2012multiplicative}.} The Hedge algorithm is used in algorithmic game theory as a practical tool to find approximate solutions of two-person zero-sum games. Importantly, the mirror subgradient descent routine typically uses $(1/T)\sum_{t=1}^{T} p_{t}$ (and not the last $p$ obtained in the iteration) as the approximate minimizer.\footnote{Averaging the trajectories of a gradient-descent routine is commonly referred to as Polyak-Ruppert averaging. See \cite{ruppert1988efficient} and \cite{polyak1992acceleration}. See also \cite{forneron2024estimation} for a discussion of Polyak-Ruppert averaging in the context of estimation and inference by stochastic optimization of nonlinear econometric models.} 

We now show that if we set the step size to $\eta \equiv \epsilon/M^2$ and stop the routine after $T = \lceil 2 M^2\ln(I) / \epsilon^2 \rceil$ epochs, the mirror subgradient descent routine in Algorithm \ref{alg:mirror_descent} can provably find an $\epsilon$-minimax solution (in the sense of Definition \ref{definition:epsilon_minimax}). %For any nonnegative real number $x$, let $\lceil x \rceil$ denote the ``ceiling function''; that is smallest integer larger than $x$.
Our main result---which follows directly from the properties of mirror subgradient descent in convex optimization problems---is the following:  

\begin{thm} \label{thm:MWU} 
Suppose Assumptions \ref{asn:A1_Risk}-\ref{asn:A2_oracle} hold. If $\epsilon \leq M$, $\eta \equiv \epsilon/M^2$, and $T \equiv \lceil2 M^2  \ln(I)/\epsilon^2 \rceil$, 
then the random choice of decision rules that assigns probability
\[ p^{\epsilon}_{i} \equiv \frac{1}{T}\sum_{t=1}^T p_{i,t}\]
to each decision rule $d_i$---where $p_{t}$ corresponds to the $t$-th iteration of the mirror descent routine in Algorithm \ref{alg:mirror_descent}---is $\epsilon$-minimax in the sense of Definition \ref{definition:epsilon_minimax}. 
\end{thm}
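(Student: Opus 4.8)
The plan is to invoke the standard convergence guarantee for mirror descent with the negative-entropy mirror map over the simplex, and then translate that guarantee into the $\epsilon$-minimax statement. The key connection is Lemma \ref{lemma:Lemma_convex}: the vector $g_t = (R(d_1,\theta_t),\ldots,R(d_I,\theta_t))^\top$ computed in Algorithm \ref{alg:mirror_descent} is a genuine subgradient of the convex objective $f(p) = \sup_{\theta} R(p,\theta)$ at the point $p_t$, where $\theta_t$ attains the supremum (this is where Assumption \ref{asn:A2_oracle} is used, to guarantee $\theta_t$ exists). Thus Algorithm \ref{alg:mirror_descent} really is honest mirror subgradient descent applied to $f$ over $\Delta(\mathcal{D})$, and the textbook analysis applies verbatim.

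First I would recall the generic mirror-descent bound (e.g.\ \cite{bubeck2015convex}, Section 4.3): when the mirror map is the negative entropy on the simplex and one runs $T$ steps with constant step size $\eta$, the averaged iterate $\bar p \equiv \frac{1}{T}\sum_{t=1}^T p_t$ satisfies, for every $p \in \Delta(\mathcal{D})$,
\[
f(\bar p) - f(p) \;\leq\; \frac{\ln I}{\eta T} + \frac{\eta}{2}\max_{1\le t\le T}\|g_t\|_\infty^2 .
\]
The two ingredients that feed this bound are (i) the diameter of the simplex in the Bregman divergence of the negative entropy, which is bounded by $\ln I$ because the uniform initialization minimizes the entropy and the relative entropy to any point is at most $\ln I$; and (ii) a bound on the dual norm of the subgradients. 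Since the mirror map is $1$-strongly convex with respect to $\|\cdot\|_1$, the relevant dual norm is $\|\cdot\|_\infty$, and Assumption \ref{asn:A1_Risk} gives $\|g_t\|_\infty = \max_i R(d_i,\theta_t) \le M$. Taking the infimum over $p$ on the right-hand side and using $f(p) \ge \bar v$ for the minimizing $p$, I would obtain
\[
f(\bar p) - \bar v \;\leq\; \frac{\ln I}{\eta T} + \frac{\eta M^2}{2}.
\]

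Next I would plug in the prescribed step size $\eta = \epsilon/M^2$ and iteration count $T = \lceil 2M^2\ln(I)/\epsilon^2\rceil$. The second term becomes $\frac{\epsilon}{2M^2}\cdot M^2 = \epsilon/2$. For the first term, using $T \ge 2M^2\ln(I)/\epsilon^2$ gives $\frac{\ln I}{\eta T} \le \frac{\ln I}{(\epsilon/M^2)(2M^2\ln(I)/\epsilon^2)} = \frac{\epsilon}{2}$. Summing, $f(\bar p) - \bar v \le \epsilon$, which is exactly the statement that $\sup_\theta R(\bar p,\theta) \le \bar v + \epsilon$, i.e.\ $p^\epsilon = \bar p$ is $\epsilon$-minimax in the sense of Definition \ref{definition:epsilon_minimax}. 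The role of the hypothesis $\epsilon \le M$ is to ensure $\eta = \epsilon/M^2 \le 1/M$ stays in the regime where the step-size choice is the balancing one; it does not otherwise enter the arithmetic.

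The only genuinely delicate point, and the step I expect to require the most care, is justifying the generic mirror-descent inequality in a way that is self-contained rather than merely cited, since the objective $f$ is nonsmooth (it is only Lipschitz, per Lemma \ref{lemma:Lemma_convex}) and the iterates live on the boundary-inclusive simplex where the negative-entropy map has unbounded gradient. I would handle this by appealing directly to the standard subgradient form of the analysis in \cite{bubeck2015convex}, which is stated precisely for Lipschitz convex functions and for the entropic setup, so that nonsmoothness is already accommodated; the boundary issue is absorbed because the multiplicative-weights update keeps all coordinates strictly positive for finite $\eta$ and the divergence bound $\ln I$ holds uniformly. Everything else is the routine substitution above, so the proof reduces to correctly matching the constants $M$ and $\ln I$ to the diameter and Lipschitz bounds supplied by Assumptions \ref{asn:A1_Risk}--\ref{asn:A2_oracle} and Lemma \ref{lemma:Lemma_convex}.
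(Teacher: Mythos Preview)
Your proposal is correct and matches the paper's first proof essentially line for line: verify that Algorithm~\ref{alg:mirror_descent} is mirror descent with the negative-entropy map, use Lemma~\ref{lemma:Lemma_convex} and Assumption~\ref{asn:A2_oracle} to certify $g_t$ as a subgradient, invoke the standard bound from \cite{bubeck2015convex} with diameter $\ln I$ and $\|g_t\|_\infty \le M$, and substitute $\eta$ and $T$ to get $\epsilon/2+\epsilon/2$. The paper additionally supplies a second, self-contained proof via the multiplicative-weights potential argument of \cite{arora2012multiplicative}, in which the hypothesis $\epsilon \le M$ is actually used (to justify $e^{-x}\le 1-x+x^2/2$ for $|x|\le 1$ with $x=\eta g_{i,t}$); your remark that this hypothesis is not needed for the Bubeck route is accurate.
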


\begin{proof} We present two different proofs of Theorem \ref{thm:MWU}. First, in Appendix \ref{subsection:Proof_Thm_1_bubeck} we apply a well-known result from the convex optimization literature that shows that mirror subgradient descent provably finds an $\epsilon$-approximate minimizer of a convex function; in particular, we verify the conditions of Theorem 4.2 in \cite{bubeck2015convex}. Second, in Appendix \ref{subsection:Proof_Thm_1} we adapt (and extend) the results in \cite{arora2012multiplicative} concerning the use of the Hedge algorithm to approximate the minimax solution of two-player zero-sum games where both players have finitely-many pure strategies. In adapting and extending their results, we improve the number of epochs by a factor of two, and match the results in Theorem 4.2 in \cite{bubeck2015convex}.  
\end{proof}

Theorem \ref{thm:MWU} presents a concrete computational strategy to approximately solve the statistical decision problems considered in this paper. The only tuning parameter that needs to be chosen is $\epsilon$, which controls the approximation error. We note that in cases where it is difficult to commit to a value of $\epsilon$ explicitly, one can solve for the value of $\epsilon$ if there is a specific target for the runtime of the algorithm and we know the time it takes for each iteration to run.

We make some important remarks about Theorem \ref{thm:MWU}.

\begin{remark}[\emph{Optimality of Algorithm \ref{alg:mirror_descent}}] There is a sense in which Algorithm \ref{alg:mirror_descent} is essentially the best first-order iterative algorithm for minimizing a convex, Lipschitz function over the simplex; see Proposition 4.2 in \cite{ben2001ordered}. We briefly review the notation in \cite{ben2001ordered} and summarize their findings.

Let $\mathcal{F}(M,I)$ denote the collection of all minimization problems of a convex, Lipschitz function $f$ (with respect to $\|\cdot\|_1$ and with constant at most $M$) over the $I-1$ simplex. Since the minimization problem is indexed entirely by the function $f$, we denote the elements of $\mathcal{F}(M,I)$ succinctly as $f$. Let $\partial f(p)$ denote the subdifferential of $f$ (the set containing all subgradients) at $p$. Let $A$ be a \emph{first-order} iterative algorithm that successively generates points $p_t (A,f) \in \Delta^{I-1}$ and approximate solutions $p^{t}(A,f)$. We restrict the class of algorithms by requiring both $p_t$ and $p^{t}$ to be deterministic functions of first-order information about $f$; namely the history of evaluations of $f$ and its subdifferential: $\{ f(p_s), \partial f(p_s)  \}_{s=1}^{t-1}$. For the starting search point or \emph{initial condition}, $p_1$, we require it to be chosen independently of the function $f$. We denote the class of deterministic, iterative, first-order algorithms as $\mathcal{A}$. Given a tolerance $\epsilon$, define the \emph{complexity} of the class of optimization problems $\mathcal{F}(M,I)$ with respect to algorithm $A$ as the function
\[ \textrm{Complexity}_{A}(\epsilon; \mathcal{F}(M,I)) \equiv \inf\{ T \in \mathbb{N} \: | \: f(p^t(A,f)) - \inf_{p \in \Delta^{I-1}} f(p) \leq \epsilon, \quad \forall t \geq T, f \in \mathcal{F}(M,I) \}. \]
This is the smallest number of calls needed by algorithm $A$ to generate an $\epsilon$-approximate solution for any convex optimization problem over the simplex. Define the complexity of the family of optimization problems in $\mathcal{F}(M,I)$ as 
\[ \textrm{Complexity} (\epsilon; \mathcal{F}(M,I)) \equiv \inf_{A \in \mathcal{A}} \textrm{Complexity}_{A}(\epsilon; \mathcal{F}(M,I)).  \]
Proposition 4.2 in \cite{ben2001ordered} shows that 
\[\textrm{Complexity} (\epsilon; \mathcal{F}(M,I)) \geq O(1) \min\{ M^2 / \epsilon^2 , I \}.  \]
Therefore, the smallest number of calls to the objective function and its subgradient required by \emph{any} iterative, first-order algorithm for (convex) optimization over the $(I-1)$ simplex of a Lipschitz function with constant at most $M$ (with respect to $\| \cdot \|_1$ norm) is $O(1)M^2/ \epsilon^2$, provided $\epsilon \geq M / \sqrt{I}$. Thus, Algorithm \ref{alg:mirror_descent}, and the suggested number of iterations in Theorem \ref{thm:MWU}, are optimal up to the logarithmic factor $\ln(I)$.\footnote{There exist different results in the computer science literature providing lower bounds for the \emph{regret} of the Multiplicative Weights update method in problems where a decision maker chooses randomly among $I$ alternatives; see Section 4 in \cite{arora2012multiplicative} and also \cite{lowerboundstight}. We note, however, that these regret bounds do not speak directly to the question of whether there exists another iterative algorithm for convex optimization of an $M$-Lipschitz function over the simplex that could find an $\epsilon$-minimizer in less epochs than the Hedge Algorithm.} 
\end{remark}

\begin{remark}[\emph{Approximate Least-favorable distribution}] \label{remark:LFD}
The statistical decision problems we study in this paper can be interpreted as the following two-player zero-sum game: the two players are 1) the statistician who has pure strategies $\mathcal{D} = \{ d_1, d_2, ..., d_I\}$, and 2) ``nature'',  whose set of pure strategies is given by the parameter space $\Theta$. The payoff function is $R(d_i, \theta)$. In the mixed extension of the game, in each round, the statistician first chooses a mixed strategy $p_t \in \Delta(\mathcal{D})$, and then nature responds with a choice $\theta_t$. Surprisingly, Algorithm~\ref{alg:mirror_descent} not only outputs (provably) an approximate minimax solution for the statistician, but also gives (provably) an approximate maximin solution for nature. In particular, the empirical distribution of the sequence of  nature's best responses, $\{\theta_{t}\}_{t=1}^{T}$, 
is an \emph{$\epsilon$-maximin} strategy for nature. See Appendix~\ref{sec:S_games} for a detailed explanation. More generally, it is worth noting that it is straightforward to modify Algorithm \ref{alg:mirror_descent} to directly find maximin solutions to statistical decision problems in which nature has finitely many pure strategies $\{\theta_1, \ldots, \theta_{I}\}$, even when the space of decision rules for the statistician is unrestricted (instead of doing mirror descent, we simply do mirror ascent).\footnote{While there are clearly many instances of maximin problems in which the parameter space for nature has finitely many elements (see, for example, \cite{gilles2020evasive}), maximin problems in statistical decision theory typically feature an infinite parameter space. While this parameter space can always be discretized (for example, see \cite{chamberlain2000econometric}, \cite{hartline2024subgame,guo2025algorithmic}), the justification for such a discretization is very different from the one we use to focus on minimax problems with finitely many decision rules. As mentioned in the introduction, our motivation to consider finitely many decision rules reflects the practical consideration that one is typically only interested in the performance of relatively simple rules that could be used in  practice.} The best response for the statistician is obtained via Bayes risk minimization. The $t$-th epoch update for the probabilities attached to each element of the set $\{\theta_1, \ldots, \theta_{I}\}$---denoted $\pi_{i,t}$---take the form: 
\[
    \pi_{i,t} \propto \pi_{i,t-1} \cdot \exp(\eta \cdot \tilde{g}_{i,t}),
    \]
where  
\[ \tilde{g}_{t} \equiv (R(d_t, \theta_1), \ldots, R(d_t, \theta_I))^\top, \]
and $d_t$ is a Bayes rule associated to the prior $(\pi_{i,t}, \ldots, \pi_{I,t})$. The provably $\epsilon$-maximin solution for nature (or approximate least-favorable distribution) is
\[ \frac{1}{T} \sum_{t=1}^{T} \pi_{t}. \]
An $\epsilon$-minimax decision rule can be recovered from the algorithm that finds the approximate least-favorable distribution by considering the randomized decision rule that attaches probability $1/T$ to the decision rule $d_{t}$. Note that simply reporting a Bayes rule associated to the approximate least-favorable distribution needs not be $\epsilon$-minimax.\footnote{This is because not all Bayes rules associated to the least-favorable distribution are necessarily minimax optimal. } Other recent algorithms for directly solving a certain class of maximin problems can be found in \cite{balter2024solving}. 
\end{remark}

\begin{remark}[\emph{Approximate evaluation of $f(p)$}]
\label{remark:relax_A3} 
It is possible to extend the results of Theorem \ref{thm:MWU} to the case in which $\theta_t$ is not the exact solution of the problem in \eqref{eqn:f_convex}, but an approximate one. More precisely, consider $\theta_t^\delta$ such that
\begin{equation}\label{eq:theta.approximate}
\left(\sup_{\theta \in \Theta} \sum_{i=1}^I p_{i,t}R(d_i,\theta)\right) - \delta \leq \sum_{i=1}^I p_{i,t}R(d_i,\theta_t^\delta) \leq \sup_{\theta \in \Theta} \sum_{i=1}^I p_{i,t}R(d_i,\theta).   
\end{equation}
This extension can be (roughly) completed by slightly adjusting the proof of Theorem \ref{thm:MWU} in Appendix \ref{subsection:Proof_Thm_1}. In Appendix \ref{sec:approximate}, we  show that choosing $T$ as we have done gives an $\epsilon + \delta$ approximation. 
\end{remark} 

\begin{remark}[\emph{Minimax problems with infinitely many decision rules}] \label{remark:infiniteD} The typical minimax problem in statistical decision theory takes the form 
\begin{equation} \label{eqn:general_minimax}
\bar{v}^*  \equiv \inf_{d \in \mathcal{D}^*} \sup_{\theta \in \Theta} R(d,\theta), 
\end{equation}
where $\mathcal{D}^*$ is usually the space of all \emph{randomized} decision rules. Suppose the $I$ decision rules in $\mathcal{D}$ are nonrandomized. Then, given  $p^\epsilon$, there is a randomized rule  $d^{\epsilon} \in \mathcal{D}^*$ such that $R(d^\epsilon,\theta) = R(p^\epsilon,\theta)$ for every $\theta$.\footnote{For any $x \in \mathcal{X}$, simply take $d^{\epsilon}(x)$ to be the discrete distribution $p^\epsilon$ over the actions $(d_1(x),\ldots, d_I(x))$.} This means that 
\begin{equation} \label{eqn:upper_bound_infinite_D}
\bar{v}^* \leq \sup_{\theta \in \theta} R(p^\epsilon,\theta) = f(p^\epsilon).
\end{equation}
Similarly, let $q^\epsilon$ denote the least-favorable distribution obtained from Algorithm \ref{alg:mirror_descent}, as explained in Remark \ref{remark:LFD}. The standard relation between average and maximum risk implies
\begin{equation} \label{eqn:lower_bound_infinite_D}
\inf_{d \in \mathcal{D}^*} \mathbb{E}_{q^\epsilon}[R(d,\theta)] \leq \bar{v}^*.
\end{equation}
This means that the output of Algorithm \ref{alg:mirror_descent} can be used to upper and lower bound the minimax value in \eqref{eqn:general_minimax}. Moreover, if we define $\epsilon^* \equiv f(p^{\epsilon})- \inf_{d \in \mathcal{D}^*} \mathbb{E}_{q^\epsilon}[R(d,\theta)] \geq 0$, then $p^\epsilon$ is an $\epsilon^*$-minimax decision rule for the problem with infinitely many decision rules. This is simply because 
\[ \bar{v}^* \leq f(p^\epsilon) \leq \left( \bar{v}^* -\inf_{d \in \mathcal{D}^*} \mathbb{E}_{q^\epsilon} [R(d,\theta)]\right)   + f(p^\epsilon) = \bar{v}^* + \epsilon^*.  \]
This means that Algorithm \ref{alg:mirror_descent}---when applied to finitely many decision rules---generates an $\epsilon^*$-minimax decision rule for the problem in \eqref{eqn:general_minimax}. The caveat is that $\epsilon^*$ is determined ex-post, and not ex-ante as in Theorem \ref{thm:MWU}. It is also worth mentioning that there are versions of the Hedge algorithm for problems where both players have infinitely many actions; see for example \cite{krichene2015hedge}.\footnote{Broadly speaking, their results can be used to show that, for any target $\epsilon$, there exists a large enough number of iterations that guarantee that the output of the Hedge algorithm is an $\epsilon$-approximate solution. From the perspective of implementation, the required number of iterations can be shown to depend on the Kullback-Leibler divergence between the initial condition and the exact solution to the problem. Since the exact solution is unknown, it becomes harder to give explicit recommendations as those we provided in Theorem \ref{thm:MWU}.} 
\end{remark}

\begin{remark}[\emph{Stopping the algorithm before our suggested $T$ epochs}] The recommended number of epochs in Theorem \ref{thm:MWU} provably finds an $\epsilon$-minimax solution for \emph{any} risk function that satisfies Assumptions \ref{asn:A1_Risk} and \ref{asn:A2_oracle}. We note, however, that for a particular risk function at hand it is possible to stop the algorithm before $T \equiv \lceil2 M^2  \ln(I)/\epsilon^2 \rceil$ rounds. This means that, in any given application, our suggested number of rounds are best viewed as an upper bound on the number of iterations needed to provably generate an $\epsilon$-minimax solution. The main observation is that---using the same arguments we used to derive \eqref{eqn:upper_bound_infinite_D} and \eqref{eqn:lower_bound_infinite_D}---we can show that
\[ \inf_{p \in \Delta^{I-1}} \mathbb{E}_{q^\epsilon} [ R(p,\theta)] \leq \bar{v} \leq \sup_{\theta \in \Theta} R(p^{\epsilon},\theta).      \]
Consequently, in each iteration of Algorithm \ref{alg:mirror_descent} one could check the gap between the upper and the lower bound (but evaluated at the candidate values of $q^\epsilon$ and $p^\epsilon$ at round $\widetilde{T} \leq T$) and stop whenever the gap is smaller than $\epsilon$. Note that to evaluate the lower bound, it suffices to keep track of 
\[ \frac{1}{\widetilde{T}} \sum_{t=1}^{\widetilde{T}} R(d_i,\theta_t),\]
for any round $\widetilde{T} \leq T$, and for each decision rule $d_i$. The upper bound can be evaluated directly, or we can use an upper bound based on the output of the algorithm; see Remark \ref{remark:alternative-value}.
\end{remark}

\begin{remark}[\emph{Finite $\Theta$}] 
When $\Theta$ has $J$ elements, obtaining an exact minimax solution could be done via a linear program \citep[Section III.1, p. 36]{dantzig1951proof, adler2013equivalence,owen2013game}. The computational cost of using the fastest solver for linear programs can be shown to be of order $(1+J+I)^{2.055}$ time.\footnote{\cite{jiang2020faster} show that the fastest known LP solver for general (dense) linear programs can solve such a program in an order of approximate $(1+I+J)^{2.055}$ time.} We note that Algorithm \ref{alg:mirror_descent} makes $\lceil 2 M^2 \ln(I)/ \epsilon^2\rceil$ calls to nature's oracle. Suppose that the runtime of the oracle is $r(I,J)$. In each round, the algorithm evaluates the risk of the $I$ actions available to the decision maker. Thus, the runtime of the algorithm is of order
\[ M^2 I \ln (I) r(I,J) / \epsilon^2.\] 
If the calls to the oracle that computes nature's best response are not expensive, and if $M/\epsilon^2$ is not too large, the time needed in order to compute the approximate solution to the minimax problem could be smaller than that time needed to obtain the exact solution. We also note that when $\Theta$ has $J$ elements, there might also be better algorithms to find an $\epsilon$-minimax decision rule; see, for example, the Saddle-Point Mirror Prox algorithm discussed in Section 5.2, p. 317 of \cite{bubeck2015convex}. In general, if one is willing to make more assumptions beyond ours, there might be better algorithms for solving the minimax problems of interest.
\end{remark}

\begin{remark}[\emph{Minimax Solution without randomization}] \label{remark:brute_force_evaluation}
Finally, note that even if one were interested in computing the minimax optimal rule among $\{d_{1},\ldots, d_{I}\}$, one would need $I$ calls to the oracle (one for computing the worst-case performance of each rule). Surprisingly, the $\epsilon$-minimax solution among randomized rules calls the oracle $\lceil 2M^2 \ln (I)/\epsilon^2\rceil$ times. When $I$ is large, the difference could be substantial. 
\end{remark}

\begin{remark}[\emph{Alternative Approximations to the Minimax Value}] \label{remark:alternative-value}
By definition of $\epsilon$-minimax decision rule, 
\[ \bar{v} \leq \sup_{\theta \in \Theta} R(p^\epsilon,\theta) \leq \bar{v}+\epsilon. \]
Thus, the worst-case risk of $p^\epsilon$ provides an approximation to the minimax value $\bar{v}$. We note that there is an alternative approximation that can be obtained directly from the output of Algorithm \ref{alg:mirror_descent}:
\[ \bar{v}^{\epsilon} \equiv \frac{1}{T} \sum_{t=1}^{T} \left( \sum_{i=1}^{I} p_{i,t} R(d_i,\theta_t) \right),  \]
where $\theta_{t}$ corresponds to ``nature's best response'' in the $t$-th iteration of the mirror descent routine in Algorithm \ref{alg:mirror_descent}. See Appendix \ref{subsection:Proof_Thm_1}. That such an approximation to the minimax value is valid is not obvious since
\[ \sup_{\theta \in \Theta} R(p^\epsilon,\theta) = \sup_{\theta \in \Theta} \sum_{i=1}^{I} p^\epsilon_{i} R(d_i,\theta) = \sup_{\theta \in \Theta} \sum_{i=1}^{I} \left( \frac{1}{T}\sum_{t=1}^{T} p^\epsilon_{i,t} \right) R(d_i,\theta) \leq \bar{v}^{\epsilon}.  \]
\end{remark}

\section{Illustrative Examples} \label{sec:illustrations}

\subsection{$\epsilon$-Minimax Regret Treatment Choice with Partial Identification} \label{subsection:MMR}
Consider the following example taken from  \cite{stoye2012minimax} and \cite{yata2021}. A policy maker uses experimental data to decide whether to implement a new policy in a target population of interest. The treatment effect of action $a=1$ is $\mu^*\in\mathbb{R}$, while the effect of action $a=0$ is normalized to be equal to $0$. Thus, the policy maker's expected payoff equals $W(a,\mu^*) \equiv a \cdot \mu^*$. 

The data available to the policy maker is an estimated treatment effect, $\hat{\mu}$, for the experimental population. The policy maker assumes that
\begin{equation} \label{eq:stoye_data}
\hat{\mu} \sim N(\mu,\sigma^2),
\end{equation}
where $\sigma>0$ is known and where $\mu \in\mathbb{R}$ is the true effect of the policy in the population where the experiment was conducted.  The policy maker is concerned about the external validity of the experiment at hand. This is captured by allowing the effect of the policy in the experimental population ($\mu$) to be different from the effect in the target population ($\mu^*$). The policy maker is willing to work under the assumption that $\left\vert \mu^*-\mu \right\vert \leq k$ for some known $k \geq 0$. In this example, $\theta=(\mu,\mu^*)^{\top}$ and $\Theta \equiv \{(\mu,\mu^*) \in \mathbb{R}^2 \: | \: | \mu - \mu^* | \leq k \}\subseteq \mathbb{R}^2$.

A decision rule for the policy maker is a mapping $d: \mathbb{R} \rightarrow[0,1]$ from the observed experimental data \eqref{eq:stoye_data} to an action $a\in[0,1]$. The action is interpreted as the fraction of the target population that will be treated. Consider the regret loss associated to $W(a,\mu^*)$ given by $L(a,\theta) \equiv \mu^*[ \mathbf{1}\{ \mu^* \geq 0\} - a ]$. Define the risk function
\[ R(d,\theta) \equiv \mathbb{E}_{\theta}[L(d,\theta)].   \]

{\scshape Exact Minimax Solution Over all Decision Rules:} Let $\mathcal{D}^*$ denote the set of all decision rules.  \cite{stoye2012minimax} derived a solution to the minimax (regret) problem 
\begin{equation} \label{eqn:mmr-stoye}
\inf_{d \in \mathcal{D}^*} \sup_{\theta \in \Theta} R(d,\theta),   
\end{equation}
as a function of $(\sigma^2,k)$. \cite{stoye2012minimax} showed that when $k \geq \sqrt{\pi/2} \sigma$, Equation $\eqref{eqn:mmr-stoye}$ equals $k/2$. \cite{MQS2023decision} further showed that, when $k \geq \sqrt{\pi/2} \sigma$, there are infinitely many minimax-regret optimal rules. One such solution takes the form
\[d_{MQS}^\star(\hat{\mu}) = \begin{cases}
    0, & \hat{\mu} < -\rho^\star, \\
    \frac{\hat{\mu} + \rho^\star}{2 \rho^\star}, & -\rho^* \leq \hat{\mu} \leq \rho^*\\
    1, & \hat{\mu} > \rho^*,
\end{cases},\]
where $\rho^*<k$ is the unique strictly positive solution to
\begin{eqnarray}
\left( \frac{1}{2 k } \right) \rho^{*} -\frac{1}{2} + \Phi\left(-\frac{\rho^{*}}{\sigma}\right) =0,
\label{eq:rho.star.main.proof} 
\end{eqnarray} 
see Theorem 3 in \cite{MQS2023decision}.

{\scshape Approximate Minimax Regret Solution over a Class of Threshold Rules:} Suppose that instead of considering all decision rules, we focus on a class $\mathcal{D} \subset \mathcal{D}^*$ that contains only ``threshold'' rules; that is, decision rules of the form 
\[ d_i(\hat{\mu}) \equiv \mathbf{1}\{ \hat{\mu} \geq c_i\}, \]
where $c_i \in \mathbb{R}$. For concreteness, we consider 500 different values for $c_i$ equally spaced in the interval $[-k,k]$. These threshold rules seem natural for this problem. For example, if one observes a realization of $\widehat{\mu} \geq k$, any of these rules would suggest to implement the policy at scale. 

Algebra shows that, in this example, the largest worst-case risk among all threshold rules in $\mathcal{D}$ is bounded above by $M \equiv \sigma \max_{x \geq 0} x \Phi \left( (2k/\sigma) - x \right)$, where $\Phi \left( \cdot \right)$ denotes the standard normal c.d.f..\footnote{The formula corresponds to the worst-case risk of the rule that uses the threshold $c_{i}=k$ (or -$k$).} Since the expected loss is nonnegative, Assumption \ref{asn:A1_Risk} is satisfied.  

We can also show that, for a given $p \in \Delta(\mathcal{D})$, the values $(\mu,\mu^*) \in \Theta$ that verify Assumption \ref{asn:A2_oracle} can be obtained by solving three optimization problems. Define the parameter $\mu^*_{+}$ to solve%be the solution of the problem
\[ \max_{\mu^* \geq 0} \mu^* \left( \sum_{i=1}^{I} p_{i} \Phi \left( \frac{c_i-\mu^*}{\sigma} + \frac{k}{\sigma} \right)  \right), \]
and $\mu_{+} \equiv \mu_{+}^*-k$. Define the parameter $\mu^*_{-}$ to be the solution of the problem
\[ \max_{\mu^* \leq 0} -\mu^* \left( \sum_{i=1}^{I} p_{i} \Phi \left( \frac{\mu^*-c_i}{\sigma} + \frac{k}{\sigma} \right)  \right), \]
and $\mu_{-}= \mu^*_{-}+k$. Set $\theta_{p}$ to be the maximizer of
\[ \{ R(p,\mu_{+},\mu^*_{+}),  R(p,\mu_{-},\mu^*_{-}) \}. \]

Since we have verified Assumption \ref{asn:A1_Risk} and \ref{asn:A2_oracle}, we proceed to applying Algorithm \ref{alg:mirror_descent}. We consider the case in which $\sigma = 1$ and $k=2$. The value of the bound $M$ is $M = 2.5294$. Since we know that the value of the problem in Equation \eqref{eqn:mmr-stoye} is 1 (under the parameters we have chosen), we can set $\epsilon=.1$ (that is, we are willing to tolerate 10\% relative error). We later discuss how to pick $\epsilon$ in more realistic problems in which there is no information about the minimax value. The number of epochs in Theorem \ref{thm:MWU} then becomes
\[ T = \lceil 2 M^2 \ln (I)/ \epsilon^2 \rceil = 7,953. \]
The runtime of Algorithm \ref{alg:mirror_descent} is about 30 seconds (on a personal ASUS Vivobook Pro 15 @ 2.5GHz Intel Core Ultra 9 185H). Figure \ref{fig:epsilon_minimax} presents a comparison of $d^*_{MQS}$ and the $\epsilon$-minimax rule. The value of $\Bar{v}^{\epsilon}$ is 1.0033. 

\begin{figure}%[h!]
    \centering
    \includegraphics[scale=0.9]{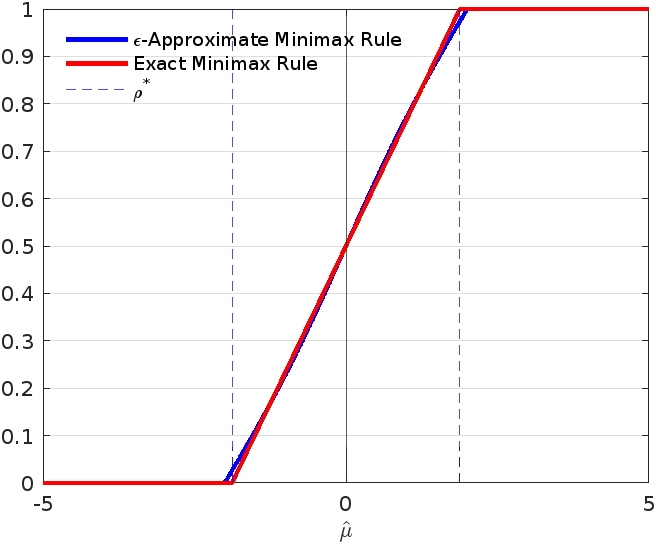}
    \caption{$\epsilon$-Minimax Decision Rule via the Hedge algorithm. The graph is generated using $\sigma=1$, $k=2$. The value of $\rho^*$ in Equation \ref{eq:rho.star.main.proof} is 1.8797.}  
    \label{fig:epsilon_minimax}
\end{figure}

\subsection{$\epsilon$-Robust Bayes Treatment Choice with Partial Identification}

Consider the same example as in Section \ref{subsection:MMR}, but instead of focusing on minimax-regret optimality as in \cite{stoye2012minimax}, we are interested in computing ex-ante Robust Bayes rules as in \cite{aradillas2024robust}. 

Let $\pi$ be a prior over $(\mu,\mu^\star)$. We are interested in obtaining the rule that minimizes worst-case expected risk over the class of priors suggested by \cite{giacomini2021robust}. We will denote this class of priors by $\Gamma$. Broadly speaking, the priors in this class fix a marginal prior over $\mu$, but allow for arbitrary priors over $\mu^*| \mu$ (as long as the joint distribution over $(\mu,\mu^*)$ is supported on $\Theta$). For this example, we will first consider the ``two-point prior'' for $\mu$ analyzed in \cite{aradillas2024robust}. That is, we assume that the prior of $\mu$ is supported on the set $\mathcal{M} = \{-\Bar{\mu},\Bar{\mu}\}$. We first assume that the policy maker has a discrete uniform prior $\pi_\mu$ on $\mathcal{M}$, meaning that $\pi_{\mu}(\mu = \Bar{\mu}) = \pi_{\mu}(\mu = -\Bar{\mu}) = 1/2$.

Just as we did in Section \ref{subsection:MMR}, we consider the regret loss $L(a,\theta) \equiv \mu^\star [\mathbf{1}\{\mu^\star \geq 0\}-a]$ and the risk function
\[R(d,\theta) \equiv \mathbb{E}_\theta[L(d,\theta)].\]
However, we are now interested in the average (or Bayesian) risk of a decision rule defined as
\[ r(d,\pi) \equiv E_{\pi} [ R(d,\theta) ]. \]
Let $\mathcal{D}^*$ be the set of all decision rules. The minimax problem of interest is thus
\begin{equation} \label{eq:robust_bayes_problem}
    \inf_{d \in \mathcal{D}^*} \sup_{\pi \in \Gamma} r(d,\pi).
\end{equation}
We follow the literature and refer to any decision rule that solves this problem as either ex-ante $\Gamma$-minimax or ex-ante Robust Bayes. 

\cite{aradillas2024robust} showed that, under some conditions, the problem in Equation \eqref{eq:robust_bayes_problem} for the two-point priors on $\mu$ described before has infinitely many solutions. One such solution takes the form
\[d^\star(\hat{\mu}) = \begin{cases}
    0, & \hat{\mu} < -\frac{\sigma^2 \rho^\star}{\Bar{\mu}}\\
    \frac{\Bar{\mu}\hat{\mu} + \sigma^2 \rho^\star}{2\sigma^2 \rho^\star}, & -\frac{\sigma^2 \rho^\star}{\Bar{\mu}} \leq \hat{\mu} \leq \frac{\sigma^2 \rho^\star}{\Bar{\mu}}\\
    1, & \hat{\mu} > \frac{\sigma^2 \rho^\star}{\Bar{\mu}}
\end{cases},\]
where $\rho^\star$ uniquely solves
\[\int_0^1 \Phi\left(\frac{2\rho^\star x - \rho^\star - (\Bar{\mu}/\sigma^2)}{\Bar{\mu}/\sigma}\right)dx = \frac{-\Bar{\mu} + k}{2k}.\]

We compare this $\Gamma$-minimax optimal rule with the $\epsilon$-approximation obtained via the Hedge algorithm. We again consider the class $\mathcal{D}$ of decision rules of the form
\[d_i = \mathbf{1}\{\hat{\mu} \geq c_i\},\]
where $c_i \in \mathbb{R}$. We again start with an equally spaced grid of 500 points over $[-k,k]$.

In order to apply the Hedge algorithm we extend the Bayes risk $r(d,\pi)$ to any element $p \in \Delta(\mathcal{D})$ by defining
\[ r(p,\pi) \equiv \sum_{i=1}^{I} p_i r(d_{i},\pi) = \sum_{i=1}^I p_i \mathbb{E}_{\pi}\left[ R(d_i,\mu,\mu^\star)\right] = \mathbb{E}_{\pi}\left[\sum_{i=1}^I p_i R(d_i,\mu,\mu^\star)\right]. \] 

We note that Assumption \ref{asn:A1_Risk} is satisfied with the same $M$ as in Subsection \ref{subsection:MMR}. In order to verify Assumption \ref{asn:A2_oracle}, we note that the results in \cite{giacomini2021robust} show that
\begin{equation} \label{eqn:worst-case-Bayes-risk}
\sup_{\pi \in \Gamma} \mathbb{E}_{\pi}\left[\sum_{i=1}^I p_i R(d_i,\mu,\mu^\star)\right] = \mathbb{E}_{\pi_\mu}\left[\Bar{\Lambda}(\mu,p_1,...,p_I)\right],
\end{equation}
where 
\begin{equation} \label{eqn:Bar_lambda_GK}
\Bar{\Lambda}(\mu,p_1,...,p_I) \equiv \sup_{\mu^\star \in [\mu-k,\mu+k]} \sum_{i=1}^I p_i R(d_i,\mu,\mu^\star).
\end{equation}
This relation immediately gives the prior $\pi \in \Gamma$ associated to the worst-case Bayes risk of any vector $p \in \Delta(\mathcal{D})$. In particular, the prior $\pi^p \in \Gamma$ that achieves the worst-case Bayes risk in \eqref{eqn:worst-case-Bayes-risk} sets the marginal prior over $\mu$ to be $\pi_{\mu}$, and the conditional prior of $\mu^*|\mu$ to be a point mass concentrated in the argument that maximizes \eqref{eqn:Bar_lambda_GK}. Thus, the subgradient $g$ used in the mirror descent update is 
\begin{equation} \label{eqn:gradient_robust_bayes}
g \equiv \left( \mathbb{E}_{\pi^p}[R(d_1,\mu,\mu^*)], \ldots,\mathbb{E}_{\pi^p}[R(d_I,\mu,\mu^*)] \right). 
\end{equation}
When $\pi_{\mu}$ has a discrete uniform prior supported on the set $\mathcal{M} = \{-\Bar{\mu},\Bar{\mu}\}$, the $i$-th coordinate of $g$ is
\[g_t^i = (1/2) \cdot R(d_i,\Bar{\mu},\Bar{\mu}_t^\star) + (1/2) \cdot R(d_i,-\Bar{\mu}, (-\Bar{\mu})_t^\star),\]
where $\Bar{\mu}_t^\star$ and $(-\Bar{\mu})_t^\star$ are the corresponding values of $\mu^\star$ for $\mu=\Bar{\mu}$ and $\mu=-\Bar{\mu}$, that solve \eqref{eqn:Bar_lambda_GK}.
We can show that the solutions of $\mu^*$ (as a function of $\mu$) are given by
\[\mu^\star = \begin{cases}
    \mu + k, & \frac{\mu + k}{2k} \geq \sum_{i=1}^I p_i \Phi\left(-\frac{c_i-\mu}{\sigma}\right)\\
    \mu - k, & \frac{\mu + k}{2k} < \sum_{i=1}^I p_i \Phi\left(-\frac{c_i-\mu}{\sigma}\right)
\end{cases}.\]

We consider the case in which $\sigma = 1$, $k = 2$, and $\Bar{\mu} = 0.5$. We set $\epsilon = 0.1$. The number of epochs in Theorem \ref{thm:MWU} is again
\[ T = \lceil 2 M^2 \ln (I)/ \epsilon^2 \rceil = 7,953. \]
The algorithm runs for $T = 7,953$ iterations and finishes in about 25 seconds (on a personal ASUS Vivobook Pro 15 @ 2.5GHz Intel Core Ultra 9 185H).

Figure \ref{fig:epsilon_minimax2} shows the true solution versus its $\epsilon$-approximate solution. Qualitatively, the two are very close. The minimax values are close as well, with the $\epsilon$-approximation having a minimax value of $.9377$ and the true solution having a minimax value of $0.9375$.  Note that here, the term referred to as $\rho^\star$-adjusted is
\begin{equation} \label{eq:rho-star}\rho^\star\text{-adjusted} = \frac{\sigma^2 \rho^\star}{\Bar{\mu}}.
\end{equation}

\begin{figure}[h!]
    \centering
    \includegraphics[scale=0.9]{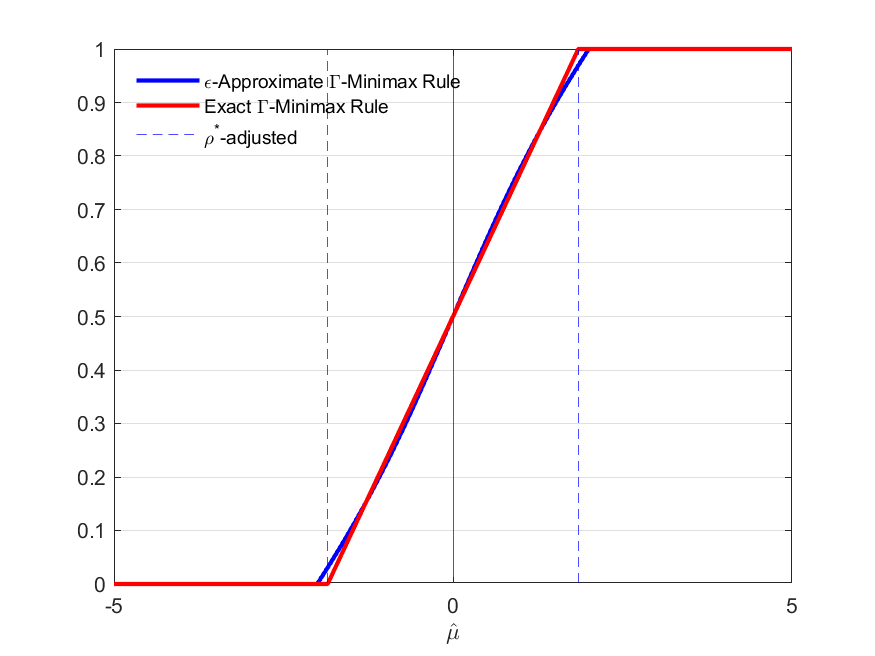}
    \caption{$\epsilon$-Minimax Decision Rule for the 2-point Robust Bayes problem via the Hedge algorithm. The graph is generated using $\sigma=1$, $k=2$. The $\rho^\star$-adjusted value is about $1.8486$. }  
    \label{fig:epsilon_minimax2}
\end{figure}

\begin{remark}[\emph{Stochastic Mirror Descent}] \label{remark:SMD} In the ex-ante Robust Bayes problem above, we focused on a uniform ``two-point'' prior for $\mu$ supported on the set $\mathcal{M} = \{-\Bar{\mu},\Bar{\mu}\}$. This assumption simplified considerably the evaluation of the subgradient $g$ in \eqref{eqn:gradient_robust_bayes}. For more general priors over the real-valued parameter $\mu$, the exact evaluation of the gradient is more challenging. However, we note that in this problem it is still possible to implement a \emph{stochastic} version of mirror subgradient descent, by using an unbiased estimator of $g$; see Chapter 6 and Theorem 6.1 in \cite{bubeck2015convex}. In the context of the example, one could obtain such unbiased estimator by using a single draw from $\pi^p$. More concretely, let $\tilde{\mu}$ be a draw from the prior $\pi_{\mu}$. Let $\tilde{\mu}^*$ be the argument that maximizes \eqref{eqn:Bar_lambda_GK} when $\mu=\tilde{\mu}$. Then
\[ \tilde{g} \equiv \left( R(d_1,\tilde{\mu},\tilde{\mu}^*), \ldots, R(d_{I},\tilde{\mu},\tilde{\mu}^*)  \right) \]
is an unbiased estimator of $g$. Under Assumption \ref{asn:A1_Risk}, this unbiased estimator is bounded for every realization of $\tilde{\mu}$. Theorem 6.1 in \cite{bubeck2015convex} allows us to show that the decision rule obtained via stochastic mirror descent is \emph{on average} an $\epsilon$-approximate solution. More precisely, let $\tilde{p}^\epsilon$ the decision rule obtained for the Robust Bayes problem from Algorithm \ref{alg:mirror_descent} when $\tilde{g}$ is used instead of $g$. Then
\[ \bar{v} \equiv \inf_{p \in \Delta^{I-1}} \sup_{\pi \in \Pi \leq } r(p,\pi) \leq \tilde{\mathbb{E}}\left[\sup_{\pi \in \Pi} r(\tilde{p}^\epsilon,\pi) \right] \leq \tilde{v} + \epsilon,   \]
where the average, $\tilde{\mathbb{E}}$, is taken over different potential runs of stochastic mirror descent.
\end{remark}

% We also consider a 500-point uniform prior (i.e., $\pi_\mu = 1/500$) supported on an equally spaced grid within $[-k,k]$. In this case, no analytical solution is available. 
%  We still keep $\sigma = 1$ and $k = 2$. Keeping $\epsilon = 0.1$, the algorithm runs for $T = 7,953$ iterations and finishes in about 440 seconds (on a personal ASUS Vivobook Pro 15 @ 2.5GHz Intel Core Ultra 9 185H). Figure \ref{fig:epsilon_minimax_2} shows the $\epsilon$-approximate solution. 

% \begin{figure}[h!]
%     \centering
%     \includegraphics[scale=0.9]{images/robust_bayes_example2.eps}
%     \caption{$\epsilon$-Minimax Decision Rule via the Hedge algorithm for the 500-point Robust Bayes problem supported along $[-k,k]$.}  
%     \label{fig:epsilon_minimax_2}
% \end{figure}

\section{Application} \label{sec:application}

\cite{lee2021poverty} conducted a randomized controlled trial in Bangladesh to estimate the effects of encouraging rural households to receive money transfers from migrant family members. They specifically conducted an encouragement design where poor rural households with family members who had migrated to a larger urban destination receive a 30--45 minute training about how to register and use the mobile banking service ``bKash'' to send instant remittances back home.

The experiment was conducted in the Gaibandha district, one of Bangladesh's poorest regions. It focused on households that had migrant workers in the Dhaka district, the administrative unit in which the capital of Bangladesh is located. \cite{lee2021poverty} measure several outcomes of both receiving households and sender migrants; see their Figures 3 and 4. To give a concrete example of the measured outcomes, one question of interest is whether families that adopt the mobile banking technology are more (or less) likely to declare that the \emph{monga}---the seasonal period of hunger in September through November---was not a problem for their household. Table 9, Column 7, p. 60 in \cite{lee2021poverty} presents results for this specific variable showing that households that used a bKash account in the treatment group are 9.2 percentage points more likely to declare that \emph{monga} was not a problem. The standard error of the estimator is 4.5 percentage points.  

Is the corridor selected by \cite{lee2021poverty} a good choice for a researcher who is concerned about external validity?\footnote{Following \cite{gechter2023site}, we name the corridors using a destination-origin format; for example, the migration corridor studied in \cite{lee2021poverty} is ``Dhaka-Gaibandha''.} Two recent papers provided answers to this question. \cite{gechter2023site} use an elegant decision-theoretic framework to argue that the \emph{Dhaka-Noakhali} corridor would have been a better choice from the perspective of maximizing average welfare. \cite{olea2024externally} use the framework of \cite{gechter2023site} to argue that the \emph{Dhaka-Pabna} corridor would have been a better choice from the minimax (welfare) regret criterion perspective (restricting the policy maker to consider only nonrandomized selection of corridors). The \emph{Dhaka-Pabna} corridor is also recommended by the \emph{synthetic purposive sampling} approach in \cite{egamidesigning}. One important comment is that the \emph{Dhaka-Pabna} corridor is the most representative in terms of covariates, in the sense that it minimizes the average distance (measured using the euclidean distance between covariates) to the $41$ migration corridors analyzed in \cite{gechter2023site}. 

In our application, we consider a situation where a policy maker is considering the three sites mentioned above to run an experiment: Dhaka-Gaibandha (the original site in \cite{lee2021poverty}), Dhaka-Noakhali (the site suggested by \cite{gechter2023site}) and Dhaka-Pabna (the site suggested in \cite{olea2024externally}). Each of these sites (migration corridor) have site characteristics $X_s \in \mathbb{R}^{d}$, with $d=13$.\footnote{The covariates include mean household income, mean household size, migrant density, mean remittances. See Figure 2 in \cite{olea2024externally}.} We index these three sites by $1,2,3$ respectively and refer to the set $\mathcal{S}_E \equiv \{1,2,3\}$ as the set of experimental sites. Once we exclude these three sites, we have 38 migration corridors. We use the distance between the covariates of each of these sites and Dhaka-Pabna to order them in increasing order and index them with integers $4$ to 41. Figure \ref{fig:distances_experimental_policy} presents the distances. The figure shows that for most of the sites the corridor Dhaka-Pabna is the ``closest'' in terms of the Euclidean distance between covariates.  

\begin{figure}[h!]
    \centering
    \includegraphics[scale=0.7]{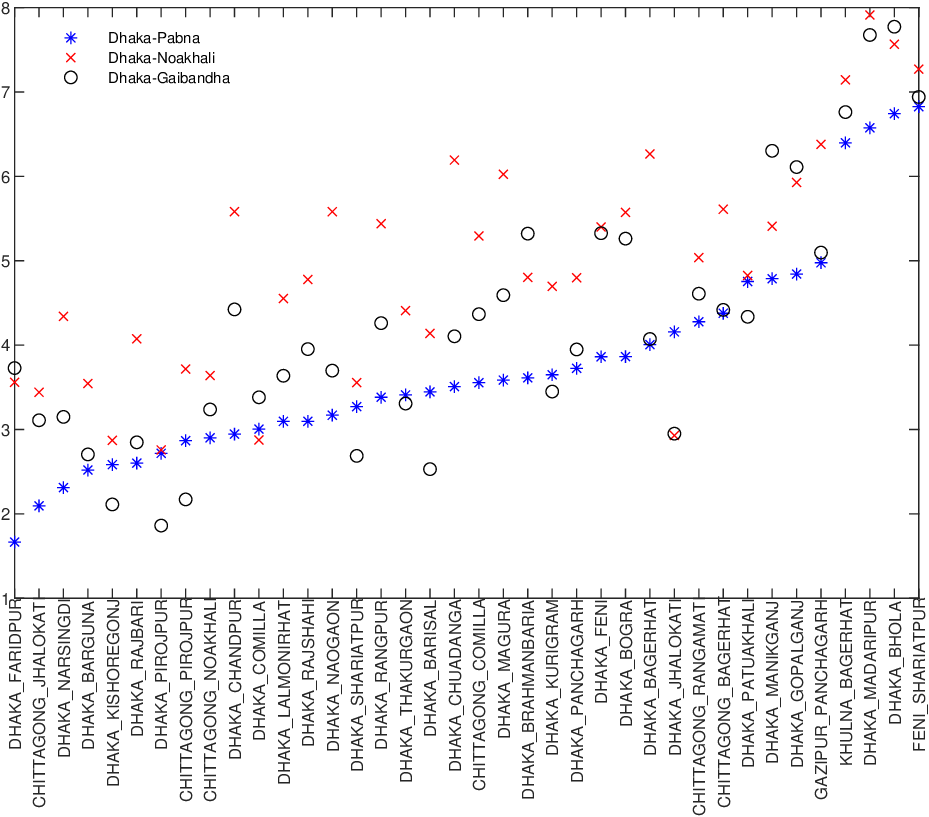}
    \caption{Distances from each of the experimental sites to each of the policy-relevant sites.}  
    \label{fig:distances_experimental_policy}
\end{figure}

We assume that the sites $\mathcal{S}_{p} \equiv \{ 4, \ldots, 41 \}$ in the $x$-axis of Figure \ref{fig:distances_experimental_policy} are the policy-relevant sites. This means that policy maker is interested in deciding whether the training program discussed in \cite{lee2021poverty} should be rolled out in these sites. We assume that the outcome variable of interest for the policy maker is the likelihood that the households declare that the \emph{monga} was not a problem. 

\emph{Treatment Effect Heterogeneity:} 
%answer the question of interest by trying to approximate the minimax regret solution of the site selection problem, but allowing for randomization. In this application we have $\mathcal{S}_E = \mathcal{S}_P = \{1,...,41\} \equiv \mathcal{S}$. Each site (migration corridor) $s \in \mathcal{S}$ has site characteristics $X_s \in \mathbb{R}^d$ with $d = 13$. 
Treatment effect heterogeneity is allowed, but only via the observable site characteristics. The effects of the policy in each site, denoted by $\tau_{s}$, are restricted to be a Lipschitz function (with respect to a Euclidean norm $||\cdot||$) with known constant $C$; that is, $\tau_{s} = \tau(X_s)$, where
\[|\tau(x)-\tau(x')| \leq C||x-x'||, \hspace{1em} \forall x,x' \in \mathbb{R}^{13}.\]
One first issue that we need to address in order to conduct our exercise is the value of $C$ that will be used in our application. We do this by using the available point estimates of the treatment effect of the program in \cite{lee2021poverty}. Let $x_{DG}$ denote the covariates of the corridor Dhaka-Gaibandha. Assume that the we entertained the possibility that the true effect, $\tau(x_{DG})$, coincides with the estimated effect 9.2. We consider a ``low $C$'' regime. 

Suppose that we want to consider a value of $C$ that imposes that if 9.2 were the true effect, then even the corridor that is the most different (in terms of covariates) to Dhaka-Gaibandha the effect of the program must be nonnegative. % Intuitively, this means that even for the most different corridor, the effect must remain nonnegative.
Dhaka-Bhola is the most different site and $\| x_{\textrm{DG}}- x_{\textrm{DB}} \| = 7.7736$. Since the Lipschitz restriction imposes that 
\[  \tau(x_{\textrm{DG}}) - C \| x_{\textrm{DG}}- x_{\textrm{DB}} \|   \leq \tau(x_{\textrm{DB}}),\]
we could pick $C$ as 
\[ C = 9.2 / 7.7736 \approx 1.1834. \]

% \begin{enumerate}
% \item \emph{Low C/Moderate treatment effect heterogeneity:} Suppose first that we want to consider a value of $C$ that imposes that if 9.2 were the true effect, then even the corridor that is the most different (in terms of covariates) to Dhaka-Gaibandha the effect of the program must be nonnegative. Intuitively, this means that even for the most different corridor, the effect must remain nonnegative. Dhaka-Bhola is the most different site and $\| x_{\textrm{DG}}- x_{\textrm{DB}} \| = 7.7736$. Since the Lipschitz restriction imposes that 
% \[  \tau(x_{\textrm{DG}}) - C \| x_{\textrm{DG}}- x_{\textrm{DB}} \|   \leq \tau(x_{\textrm{DB}}),\]
% we could pick $C$ as 
% \[ C = 9.2 / 7.7736 \approx 1.1834. \]

% \item \emph{High C/High treatment effect heterogeneity:} Suppose now that we want to consider a value of $C$ that imposes that if 9.2 were the true effect, then only the effect of the program in the corridor that is the most similar (in terms of covariates) to Dhaka-Gaibandha must be nonnegative. Dhaka-Pirojpur is the closest site and $\| x_{\textrm{DG}}- x_{\textrm{DP}} \| = 1.8614$. Since the Lipschitz restriction imposes that 
% \[  \tau(x_{\textrm{DG}}) - C \| x_{\textrm{DG}}- x_{\textrm{DP}} \|   \leq \tau(x_{\textrm{DP}}),\]
% we could pick $C$ as 
% \[ C = 9.2 / 1.8614 \approx 4.9425. \]

% \end{enumerate}

\emph{Treatment Rules:} The policy maker makes two choices. First, the policy maker must pick one site on which to experiment. Second, the policy maker must decide how to make treatment choices in all the sites of interest given the available data. We assume that if the policy maker decides to experiment on site $s$, the available data becomes $\widehat{\tau}_s$, with
\begin{equation} \label{eq:site-selection-data}\widehat{\tau}_{s} \sim \mathcal{N}(\tau_s,\sigma^2_s)
\end{equation}
and, as in \cite{gechter2023site}, we assume $\sigma^2_s$ is known. In order to conduct our exercise, we assume that $\sigma_s$ is the same for all experimental sites, and that it matches the standard error of the estimated effect of the program in the Dhaka-Gaibhanda site. That is $\sigma_{s}=4.5$ for all $s \in \mathcal{S}_{E}$.  

The treatment rule is a mapping $T:\mathbb{R} \rightarrow [0,1]^{\#\mathcal{S}_{P}}$. For $s \in \mathcal{S}_{E}$ we further denote by $T_s$ the specific policy choice for site $s$. We refer to a tuple $(s,T)$ as a policy, and we use $d$ to denote it. 
% For every $s',s \in \mathcal{S}$, $s\neq s$, define $K_{s',s} \equiv C \| X_{s'}-X_s \|$ and $\omega^2_{s',s} \equiv  \left( \frac{K_{s',s}}{\sqrt{\pi/2}} \right)^2 - \sigma^2_{s}$. We assume $C$ is large enough to make $\omega_{s',s}^2$ positive for any pair of different sites. 
We consider three nonrandomized policies
\[\mathcal{D} \equiv \left\{d_1, d_2, d_3 \right \}. \]
Under policy $d_s$, the policy maker experiments on site $s \in \mathcal{S}_{E}$ and its recommendation for any policy relevant site is $\mathbf{1}\{\widehat{\tau}_s \geq 0\}$. That is, the policy maker makes the same policy recommendation for every policy-relevant site depending on the sign of $\widehat{\tau}_{s}$. \footnote{The results in \cite{olea2024externally} suggest that this type of policy is likely to be suboptimal. The policy maker could improve its welfare by allowing the treatment choice to be randomly selected, depending on the distance between the policy-relevant site and the experimental site.} We focus on this special form of policy rule because we think it captures the policy recommendations that are given based on randomized controlled trials.  

% \[ d_s(\widehat{\tau}_{s}) \equiv \left (\Phi\left( \frac{\widehat{\tau}_{s}}{\omega_{1,s}} \right), \ldots, \underbrace{\mathbf{1}\{\widehat{\tau}_s \geq 0\}}_{s\text{-th entry}}, \ldots , \Phi\left( \frac{\widehat{\tau}_{s}}{\omega_{S,s}} \right), s \right ), \]
% and $\Phi(\cdot)$ denotes the standard normal c.d.f.

We consider the following regret function for the policy $d_s$, 
\begin{equation} \label{eq:site-selection-regret}
\begin{aligned}
R(d_s,\tau) &\equiv \frac{1}{\#\mathcal{S}_{P}} \sum_{s' \in \mathcal{S}_{P}} \left(\tau(X_{s'})(\mathbf{1}\{\tau(X_{s'}) \geq 0\} - \mathbb{E}_{\tau(X_s)}\left[\mathbf{1}\{\widehat{\tau}_s \geq 0\}\right]) \right).
\end{aligned}
\end{equation}
This expression can be simplified to
\begin{equation}\label{eq:site-selection-simplified-regret}
\begin{aligned}
R(d_s,\tau) &\equiv \frac{1}{\#\mathcal{S}_{P}} \sum_{s' \in \mathcal{S}_{P}} \left(\tau(X_{s'})\left(\mathbf{1}\{\tau(X_{s'}) \geq 0\} - \Phi\left(\frac{\tau(X_s)}{\sigma_s}\right)\right) \right).
\end{aligned}
\end{equation}

The minimax (regret) problem that we are interested in solving is
\begin{equation} \label{eq:site-selection-minimax-problem}\inf_{p \in \Delta^{2}} \sup_{\tau \in \text{Lip}_{C}(\mathbb{R}^{13})} \sum_{s=1}^{3} p_{s} R(d_s,\tau),
\end{equation}
where $\text{Lip}_{C}(\mathbb{R}^{13})$ refers to the space of all Lipschitz functions $f:\mathbb{R}^{13} \rightarrow \mathbb{R}$ with constant $C$.

\subsection{Results} 

We report results for the case in which $C=1.1834$. We consider four different scenarios that vary in terms of the number of sites that are policy relevant. The scenarios we consider have either 1, 5, 15, or 38 policy-relevant sites. In each of these cases, we choose to include the sites that are closest to Dhaka-Pabna. For example, when we include only one policy-relevant site we include Dhaka-Faridpur. We do this because, in light of the results in \cite{olea2024externally}, the best nonrandomized choice of experimental site is Dhaka-Pabna. And we would like to use this application to understand how the probability of selecting this site changes as we include sites that perhaps are closer to some of the other experimental sites under consideration. 

Figure \ref{fig:minimax_probabilities} presents the $\epsilon$-minimax selection of sites obtained via the Hedge algorithm. Note first that when there is only one policy-relevant site (and this site is closest to Dhaka-Pabna) the probability of choosing Dhaka-Pabna is close to 1. This is measured by the height of the first yellow bar in Figure \ref{fig:minimax_probabilities}. We think this is an interesting result as it shows that even if randomization is allowed, it is possible that choosing the site that is most representative for the policy-relevant site is still  approximately minimax regret optimal. 

The results with five policy relevant sites are also worth discussing. By construction, the five policy-relevant sites that we considered are those that are closest to Dhaka-Pabna. According to Figure \ref{fig:distances_experimental_policy}, Dhaka-Pabna is the nearest neighbor for all of them, with the exception of Dhaka-Kishoregonj. For the latter site, the nearest neighbor is Dhaka-Gaibandha. Figure \ref{fig:minimax_probabilities} shows that, with 5 sites, the $\epsilon$-minimax selection of experimental sites places probability slightly above $.2$ on Dhaka-Gaibandha (corresponding to the height of the second blue bar) and probability close to .7 on Dhaka-Pabna (corresponding to the height of the second yellow bar).    

\begin{figure}[h!]
    \centering
    \includegraphics[scale=0.75]{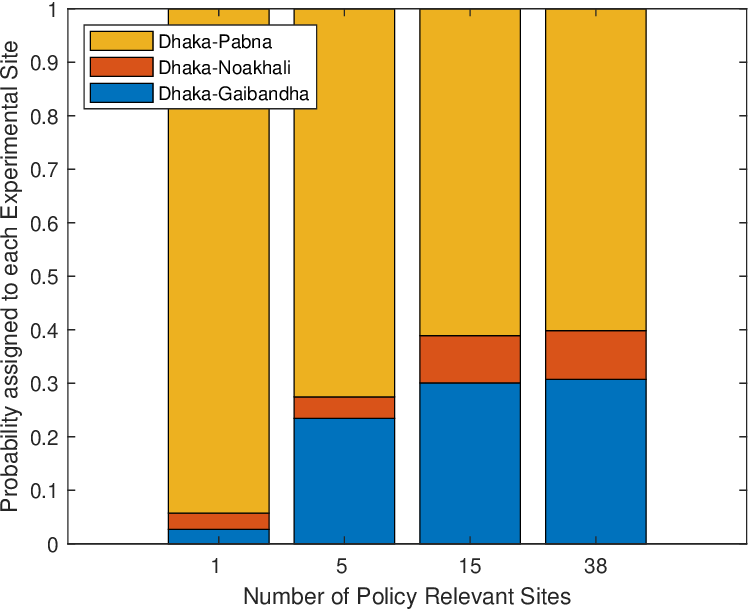}
    \caption{$\epsilon$-Minimax decision rule for the Site Selection Problem via the Hedge algorithm. The graph is generated using $C=1.1834$, $\sigma = 4.5$, and $\epsilon=.1$. }  
    \label{fig:minimax_probabilities}
\end{figure}

We finally discuss the cases in which there are 15 or 38 policy-relevant sites. The corresponding $\epsilon$-minimax solutions are very similar, though the computation times and numbers of iterations are not; see Tables \ref{table:Runtime_iterations}. The recommended probability of experimenting on Dhaka-Pabna is close to .6 (height of the last yellow bar). The recommended probability of experimenting on Dhaka-Noakhali is close to .1. Interestingly, the ordering of the probabilities is also consistent with the ordering of the three experimental sites in terms of how frequently they are the nearest neighbor of each of the policy-relevant sites. 

%\textcolor{cyan}{(need to update the following table, including number of iterations)}

\begin{table}[h!] 
    \centering
    \begin{tabular}{|c|c|c|c|}
    \hline
    Number of Sites & Runtime (seconds) & Iterations & Mean Runtime per Iteration\\
    \hline
    1 & $354$ & $1,005$ & $0.35$\\
    5 & $277$ & $1, 015$ & $0.27$\\
    15 & $442$ & $1,208$ & $0.37$\\
    38 & $2,310$ & $1,734$ & $1.33$\\
    \hline
    \end{tabular}
    \caption{Runtime (seconds), Number of Iterations ($C=1.1834$), and Mean Runtime per Iteration.} \label{table:Runtime_iterations}
\end{table}

We can also reframe our results to start from setting a target runtime for Algorithm \ref{alg:mirror_descent}. We next show how, after picking a desired run time, we can obtain a value of $\epsilon$ that matches this target. 

For the sake of concreteness, suppose that we are willing to spend $\tau = 1,800$ seconds running Algorithm \ref{alg:mirror_descent} for the case of one policy site. Most of the time spent in each of the epochs is used to compute nature's best response. As suggested in Table \ref{table:Runtime_iterations}, suppose that calling the oracle that calculates nature's best response takes $r\equiv .35$ seconds. Thus, the total number of iterations that we could afford to run becomes $\lfloor \tau / r \rfloor = 5,142$. Since $M^2$ and $I$ are known, we can solve for the value of $\epsilon$ using the formula for the number of iterations, $T$, presented in Algorithm \ref{alg:mirror_descent}:
\[\epsilon(\tau) \equiv \sqrt{\frac{2 \cdot M^2 \cdot \ln(I)}{\tau / r}}.\]
Taking $M^2 = 4.5739$ and $I=3$ gives $\epsilon(1,800) = .044$.  
Table \ref{table:epsilons_needed} shows various values of $\epsilon$ needed for different runtimes (for different number of policy-relevant sites). These values can be thought of as the maximum amount of precision (lowest $\epsilon$) attainable for each possible number of policy-relevant sites if the policy maker is only willing to spend the amount of compute time in each of the columns of Table \ref{table:epsilons_needed}.

\begin{table}[h!] 
    \centering
    \begin{tabular}{|c|c|c|c|c|}
    \hline
    Number of Sites & 30 minutes & 1 hour & 2 hours & 10 hours\\
    \hline
    1 & $.044$ & $.031$ & $.022$ & $.010$\\
    5 & $.039$ & $.028$ & $.020$ & $.009$\\
    15 & $.050$ & $.035$ & $.025$ & $.011$\\
    38 & $.113$ & $.080$ & $.057$ & $.025$\\
    \hline
    \end{tabular}
    \caption{$\epsilon$ as a function of desired runtime.} \label{table:epsilons_needed}
\end{table}

\section{Conclusion} 

This paper presented an algorithm for obtaining \emph{$\epsilon$-minimax} solutions of statistical decision problems where the statistician is allowed to choose randomly among $I$ decision rules. The notion of an $\epsilon$-minimax decision rule was taken from \cite{Ferguson67} (Chapter 1, Definition 4) and it refers to a decision rule whose worst-case expected loss exceeds the minimax value of the decision problem by at most an additive factor of $\epsilon$.\footnote{We note that the definition given in \cite{Ferguson67} differs of the usage of $\epsilon$-minimax decision rules in other contexts. Most notably, from the work of \cite{manski2016sufficient}, who use the term $\epsilon$-minimax to refer to a decision rule whose worst-case regret is at most $\epsilon$.} 

Once we allow for randomized selection over the $I$ decision rules, the minimax problem admits a convex programming representation over the $(I-1)$-simplex, an observation which has been previously documented in the literature by \cite{chamberlain2000econometric}. Both the objective function and the subgradient of this problem are in general difficult to evaluate, the reason being that the objective function of the convex problem involves solving a nonconvex maximization problem to find the worst-case performance (over the model's parameter space) of a specific randomized selection over the $I$ rules. This type of problem arises commonly in the convex optimization literature; see \cite{bubeck2015convex} and the seminal work of \cite{nemirovski1983problem}. The algorithm herein suggested is a mirror subgradient descent (with negative entropy as a mirror map), initialized with uniform weights and stopped after a finite number of iterations. The early stopping of the algorithm tries to minimize the number of calls to the objective function and its subgradient, but it provably generates an approximate solution with the desired tolerance $\epsilon$.  

The iterative procedure arising from this mirror descent routine described in this paper is known in the computer science literature as the \emph{Hedge Algorithm}, and it is  used in algorithmic game theory as a practical tool to find approximate solutions of two-person zero-sum games. 

The paper applies the suggested algorithm to different minimax problems in the econometrics literature. In some of these problems, the minimax solution is known, and we show numerically that in these examples the $\epsilon$-minimax solution is practically the same as the true minimax solution.  

Finally, we apply the algorithm to the \emph{site selection problem} of \cite{gechter2023site}; namely,  how to optimally selecting \emph{sites} to maximize the external validity of an experimental policy evaluation. Our algorithm allows the researcher to choose randomly where to experiment, but adjusting optimally for the available baseline covariate information.  

We think there are several interesting areas for future work, both from an applied and from a more theoretical perspective. From a purely applied angle, our algorithm could be useful in approximately solving certain minimax problems, such as the one described in the recent work of \cite{armstrong2024adapting}. 

From a more theoretical perspective, it would be interesting to further explore the differences between $\epsilon$-minimax strategies and the notion of a local min-max point in \cite{daskalakis2021complexity}. There are very interesting results about the relation between this notion and the stationary points of sugbradient ascent-descent dynamics. But it would be great to understand, theoretically and empirically, what are the potential benefits of searching for these type of points as opposed to $\epsilon$-minimax strategies.  
\newpage

\bibliographystyle{ecta}
\bibliography{references.bib}

\appendix

\section{Proofs of Main Results}

\subsection{Proof of Lemma \ref{lemma:Lemma_convex}} \label{subsection:Proof_Lemma_convex}

Take $p,p' \in \Delta(\mathcal{D})$. Note that 
\begin{eqnarray*}
f(\alpha p + (1-\alpha)p') &=& \sup_{\theta \in \Theta} R(\alpha p + (1-\alpha)p',\theta)\\
&=& \sup_{\theta \in \Theta}  \sum_{i=1}^{I} (\alpha p_i + (1-\alpha)p_{i}') R(d_i,\theta)  \\
&=& \sup_{\theta \in \Theta} \left( \alpha \sum_{i=1}^{I} p_i R(d_i,\theta) + (1-\alpha) p_{i}' R(d_i,\theta) \right) \\
& \leq& \alpha  \sup_{\theta \in \Theta}  \sum_{i=1}^{I} p_i R(d_i,\theta) +  (1-\alpha)\sup_{\theta \in \Theta} p_{i}' R(d_i,\theta) \\
&=& \alpha f(p)  + (1-\alpha) f(p'). 
\end{eqnarray*}
Thus, the function $f(\cdot)$ is convex. 

Now we establish the Lipschitz continuity of $f(\cdot)$. Take any $p,p' \in \Delta(\mathcal{D})$. Note that
\begin{align*}
    |f(p) - f(p')| & \leq |\sup_{\theta \in \Theta} \sum_{i=1}^{I} (p_{i} - p'_{i}) R(d_i, \theta) |\\
    & \leq M \sum_{i=1}^{I} |p_{i} - p'_{i} |  \\
    & = M \|p - p' \|_{1}, 
\end{align*}
where the second inequality applies Assumption~\ref{asn:A1_Risk}. This shows that $f(\cdot)$ is Lipschitz continuous with constant at most $M$. 

We now show that $g_0$ is a subgradient of $f$ at $p_0$. That is, that for any $p \in \Delta(\mathcal{D})$:
\[ f(p) \geq f(p_0) + g_0^{\top} (p-p_0). \]
Let $p$ be an arbitrary point in $\Delta(\mathcal{D})$. By definition
\begin{eqnarray*}
f(p_0) &\equiv& \sup_{\theta \in \Theta} \sum_{i=1}^{I} p_{0,i} R(d_i,\theta) \\ 
&=& \sum_{i=1}^{I} p_{0,i} R(d_i,\theta_0) \\
&=& \sum_{i=1}^{I} (p_{0,i}-p_i) R(d_i,\theta_0)  + \sum_{i=1}^{I} p_i R(d_i,\theta) \\
&=& g_0^{\top} (p_0-p) + \sum_{i=1}^{I} p_i R(d_i,\theta) \\
&\leq& g_0^{\top} (p_0-p) + f(p). 
\end{eqnarray*}
Thus, $g_0$ is a subgradient of $f(\cdot)$ at $p_0$.

\subsection{Proof of Theorem \ref{thm:MWU} via \cite{bubeck2015convex}} \label{subsection:Proof_Thm_1_bubeck}
We apply Theorem 4.2 from \cite{bubeck2015convex}. In order to do so, we note that: 
\begin{enumerate}
    \item The mirror map used in our mirror-descent routine is the negative entropy $\Phi(p) = \sum_{i=1}^{I} p_i \log p_i$, defined for any $p \in \textrm{int}(\Delta^{I-1})$. By Pinsker's inequality,  we have that for any $p,q \in \textrm{int}(\Delta^{I-1})$: 
    \[ \sum_{i=1}^{I} p_i \log p_i - \sum_{i=1}^{I} p_i \log q_i \geq \frac{1}{2} \left( \sum_{i=1}^{I} | p_i - q_i | \right)^2. \]
    This can be written as 
    \begin{eqnarray*}
    \Phi(q) - \Phi(p) &\leq& \sum_{i=1}^{I} (q_i-p_i) \log q_i  - \frac{1}{2} \| p-q \|^2_1 \\
    &=& \sum_{i=1}^{I} (q_i-p_i) \left( 1+ \log q_i \right)    - \frac{1}{2} \| p-q \|^2_1.
    \end{eqnarray*}
    Since the gradient of $\Phi(\cdot)$ with respect to $q$ is $(1+ \ln q_1, \ldots, 1+ \ln q_{I})^{\top}$, this means that the negative entropy is $\rho$-strongly convex over $\textrm{int}(\Delta^{I-1})$ (with respect to $\|\cdot \|_1$) with parameter $\rho = 1$.
    \item We calculate the ``radius'' of $\textrm{int}(\Delta^{I-1})$ defined as $R^2 \equiv \sup_{p \in \textrm{int}\left( \Delta^{I-1} \right)} \Phi(p) - \Phi(p_0)$. We show that $R^2 = \ln(I)$. To do this, note that
    \[ \sup_{p \in \textrm{int}\left( \Delta^{I-1} \right) } \Phi(p) = 0. \] 
    The supremum is attained by any sequence of distributions that converges to a discrete distribution that places all of its mass in one of point. Then, since $p_0 = (1/I, ..., 1/I)$ in our setting, we have that $\Phi(p_0) = -\ln(I)$. This implies that $R^2 = \ln(I)$. 
    \item Lemma~\ref{lemma:Lemma_convex} has implied that the objective function $f$ is convex and $L$-Lipschitz continuous w.r.t. $\|\cdot\|_1$, with parameter $L = M$. Let $p^*$ be a minimizer of $f$ over $\Delta^{I-1}$.
\end{enumerate}
In light of 1-2-3, the conditions of Theorem 4.2 in \cite{bubeck2015convex} are verified. The theorem then implies that mirror descent with step size $\eta$
% \begin{equation}
% \label{eq:step_size}
% \eta \equiv \frac{R}{L} \sqrt{\frac{2\rho}{T}} = \sqrt{\frac{2 \ln(I)}{T M^2}}
% \end{equation}
satisfies 
\begin{equation*}
    f \left( \frac{1}{T} \sum_{t = 1}^{T} p_t \right) - f(p^{\star}) \leq 
    \frac{1}{T}\sum_{t=1}^T f(p_t) - f(p^{\star}) \leq
    \frac{\ln(I)}{T \eta} + \frac{\eta M^2 }{2}.
\end{equation*}
Notice that $f(p^{\star}) = \bar{v}$.
For a given error $\epsilon$, we choose $\eta = \epsilon/M^2$ and $T \geq 2 M^2  \ln(I)/\epsilon^2$ such that the right hand side is smaller than $\epsilon$:
\begin{eqnarray*}
    \frac{\ln(I)}{T \eta} + \frac{\eta M^2 }{2} = \frac{\ln(I)M^2}{T \epsilon} + \frac{\epsilon}{2} \leq \epsilon.
\end{eqnarray*}
Therefore, we conclude with
\begin{equation*}
    f(p^\epsilon_i) - \bar{v} \leq \bar{v}^{\epsilon} - \bar{v} \leq \epsilon.
\end{equation*}

\newpage

\part*{Online Appendix}
\global\long\def\thepage{OA-\arabic{page}}%
\setcounter{page}{1}

\section{Additional Technical Results}

\subsection{
Proof of Theorem \ref{thm:MWU} via \cite{arora2012multiplicative}} \label{subsection:Proof_Thm_1}

We extend Theorem 2.1 and Theorem 2.3 in \cite{arora2012multiplicative}. For the sake of exposition, we divide our proof into three steps. 

    {\scshape{STEP 1:}} Fix the step-size $\eta$. First, we show that after $T$ rounds the algorithm guarantees that, for all decision rules $d_i \in \{d_1,...,d_I\}$, we have obtained average payoffs  bounded above by the average payoff of any decision rule $d_i$ plus an error term. More precisely:
\begin{equation} 
\label{eqn:aux_step_1}
\frac{1}{T}\sum_{t=1}^T \left(\sum_{i=1}^I p_{i,t} R(d_i,\theta_t)\right) \leq \frac{1}{T}\sum_{t=1}^T R(d_i,\theta_t) + \frac{M^2 \eta}{2} + \frac{\ln(I)}{T \eta}.
\end{equation}
To show this, we use a similar argument to Theorem 2.1 in \cite{arora2012multiplicative}. Let $M>0$ be the bound on the risk function in Assumption \ref{asn:A1_Risk}. We define the normalized subgradient, $m_t \equiv g_t/M$. 
Then, recall the definition of $\phi_t$, we have that 
 \begin{align*}
    \phi_{t+1} &= \sum_{i=1}^I w_{i,t} \\
    &= \sum_{i=1}^I w_{i,t-1}\exp \left( - \eta g_{i,t}\right)\\
    & = \sum_{i=1}^I w_{i,t-1}\exp \left( - \eta M m_{i,t}\right)\\
    & \leq \sum_{i=1}^I w_{i,t-1}\left(1 - \eta M m_{i,t} +  \frac{\eta^2 M^2 m_{i,t}^2}{2} \right)\\
    &= \phi_t - \phi_t \eta M \sum_{i=1}^I p_{i,t} m_{i,t} + \phi_t \frac{\eta^2 M^2}{2} \sum_{i=1}^I p_{i,t} m_{i,t}^2  \\
    &= \phi_t \left(1 -\eta M \sum_{i=1}^I p_{i,t} m_{i,t} + \frac{\eta^2 M^2}{2} \sum_{i=1}^I p_{i,t} m_{i,t}^2\right)\\
    &\leq \phi_t \exp\left( -\eta M \sum_{i=1}^I p_{i,t} m_{i,t} + \frac{\eta^2 M^2}{2} \sum_{i=1}^I p_{i,t} m_{i,t}^2\right).
\end{align*}
The first inequality follows from the fact that $\exp( - x ) \leq 1- x + x^2/2$, for $|x| \leq 1$.\footnote{Note that:
\begin{align*}
    \exp(-x) &= 1 - x + \frac{x^2}{2!} - \frac{x^3}{3!} + ...\\
    &\leq 1-x+\frac{x^2}{2},
\end{align*}
if and only if:
\[0 \leq \frac{x^3}{3!} - \frac{x^4}{4!} + ...\]
Note that $x^n \geq x^{n+1}$ for all $|x| \leq 1$ and $n \in \mathbb{N}$, and so the statement above holds.
} The last inequality follows from $1-x \leq e^{-x}$ for all $x \in \mathbb{R}$. By induction after $T$ rounds, and using the fact that $w_{0}$ was initialized to be a vector of ones (i.e., $w_0:= \mathbf{1}$, we have 
\begin{equation}
    \begin{aligned}[b]
    \phi_{T+1} &\leq 
    \phi_1 \exp\left(-\eta M \sum_{t=1}^T \sum_{i=1}^I p_{i,t} m_{i,t} + \frac{\eta^2 M^2}{2} \sum_{t=1}^T \sum_{i=1}^I p_{i,t} m_{i,t}^2\right)\\
    &= I \exp\left(- \eta M \sum_{t=1}^T \sum_{i=1}^I p_{i,t} m_{i,t} + \frac{\eta^2 M^2}{2} \sum_{t=1}^T \sum_{i=1}^I p_{i,t} m_{i,t}^2\right).
    \end{aligned}
    \label{eqn:aux_step_2}
\end{equation}
Also notice that 
\begin{equation} 
    \phi_{T+1} \geq w_{i,t+1} = \prod_{t=1}^T \exp \left(-\eta g_{i,t}\right),
\label{eqn:aux_step_3}
\end{equation}
After taking logs of both sides in \eqref{eqn:aux_step_2} and \eqref{eqn:aux_step_3}, we have
\[
- \sum_{t=1}^{T} g_{i,t} \leq \frac{\ln(I)}{\eta} - M\sum_{t=1}^T \sum_{i=1}^I p_{i,t} m_{i,t} + \frac{\eta M^2}{2}\sum_{t=1}^T \sum_{i=1}^I p_{i,t} m_{i,t}^2.
\]
Since $m_{i,t} = g_{i,t}/M = R(d_i, \theta_t)/M$, we have 
\begin{align*}
    \frac{1}{T} \sum_{t=1}^T \sum_{i=1}^I p_{i,t} R(d_i, \theta_t) \leq &
    \frac{1}{T} \sum_{t=1}^{T} R(d_i, \theta_t) + \frac{\eta M^2}{2T} \sum_{t=1}^T \sum_{i=1}^I p_{i,t} m_{i,t}^2 + \frac{\ln(I)}{T\eta} \\
    {\leq} & \frac{1}{T} \sum_{t=1}^{T} R(d_i, \theta_t) + \frac{\eta M^2}{2} + \frac{\ln(I)}{T\eta} \\
    & \textrm{(since $m_{i,t}^2 \leq 1$)}\\
    \leq & \frac{1}{T} \sum_{t=1}^{T} R(d_i, \theta_t) + \frac{\epsilon}{2} + \frac{\ln(I)M^2}{T\epsilon} \\
    & \textrm{(since $\eta = \epsilon/M^2$).}
\end{align*}

    {\scshape STEP 2:} Let $p_i^\epsilon \equiv \frac{1}{T}\sum_{t=1}^T p_{i,t}$. We show that after $T$ rounds, we have that:
    \[\Bar{v} \leq \sup_{\theta \in \Theta} \sum_{i=1}^I p_i^\epsilon R(d_i,\theta) \leq  \frac{1}{T}\sum_{t=1}^T \left(\sum_{i=1}^I p_{i,t}R(d_i,\theta_t)\right) \leq \Bar{v} + \frac{\epsilon}{2} + \frac{\ln(I)}{T} \left(\frac{M^2}{\epsilon}\right),\]
    where $\Bar{v}$ is the minimax value of the decision problem. 

    To show this, note that the lower bound holds by definition. For the upper bound, note:
    \begin{equation}
    \begin{aligned}[b]
        \sup_{\theta \in \Theta} \sum_{i=1}^I p_i^\epsilon R(d_i,\theta) &= \sup_{\theta \in \Theta} \sum_{i=1}^I \left(\frac{1}{T} \sum_{t=1}^T p_{i,t}\right) R(d_i,\theta)\\
        &= \sup_{\theta \in \Theta} \frac{1}{T}\sum_{t=1}^T \left(\sum_{i=1}^I p_{i,t} R(d_i,\theta)\right)\\
        &\leq \frac{1}{T}\sum_{t=1}^T \left(\sum_{i=1}^I p_{i,t}R(d_i,\theta_t)\right),
    \end{aligned}
    \label{eqn:step_2}
    \end{equation}
    where the inequality uses the fact that $\theta_t$ is nature's best response to $p_t$.

    Lemma~\ref{lemma:Lemma_convex} showed that $f(\cdot)$ is a continuous function on the closed set $\Delta(\mathcal{D})$. Therefore, the minimax strategy of the decision problem exists, and we denote it as 
    \[p^* \in  \arg\min_{p \in \Delta(\mathcal{D})} f(p). \]
    
     By Step 1, the right hand side of the Equation \eqref{eqn:step_2} is bounded by above by Equation \eqref{eqn:aux_step_1} for any $d_i$. It then follows that for any $p_i \in \Delta^{I-1}$, \eqref{eqn:step_2} is bounded above by
    \[\frac{1}{T}\sum_{t=1}^T p_iR(d_i,\theta_t) + \frac{\epsilon}{2} + \frac{\ln(I)}{T} \left(\frac{M^2}{\epsilon}\right).\]
    In particular, we can use $p^*$ and further use the bound
    \[\sum_{i=1}^I p_i^\star R(d_i,\theta_t) \leq \sup_{\theta \in \Theta} \sum_{i=1}^I p_i^\star R(d_i,\theta) = \Bar{v}.\]
    Consequently, 
    \[\sup_{\theta \in \Theta} \sum_{i=1}^I p_i^\epsilon R(d_i,\theta) \leq \frac{1}{T}\sum_{t=1}^T \left(\sum_{i=1}^I p_{i,t}R(d_i,\theta_t)\right) \leq \Bar{v} + \frac{\epsilon}{2} + \frac{\ln(I)}{T} \left(\frac{M^2}{\epsilon}\right).\]
    This gives the desired result. 

    {\scshape STEP 3:} By taking the smallest integer T such that
    \[\frac{\ln(I)}{T} \left(\frac{M^2}{\epsilon}\right) \leq \frac{\epsilon}{2},\]
    or, equivalently, 
    \[T = \lceil2 M^2 \ln(I)/ \epsilon^2 \rceil.\]
    We then conclude that
    \[\Bar{v} \leq \sup_{\theta \in \Theta}\left(\sum_{i=1}^I p_i^\epsilon R(d_i,\theta)\right) \leq \frac{1}{T}\sum_{t=1}^T \left(\sum_{i=1}^I p_{i,t}R(d_i,\theta_t)\right) \leq \Bar{v} + \frac{\epsilon}{2} + \frac{\epsilon}{2} = \Bar{v}  + \epsilon,\]
    where $p_i^\epsilon \equiv \frac{1}{T}\sum_{t=1}^T p_{i,t}$ (as defined before). Since
    \[p^\epsilon \equiv (p_1^\epsilon,...,p_I^\epsilon) \in \Delta^{I-1},\]
    we conclude that $\Tilde{p}^\star$ is an $\epsilon$-minimax decision rule and that $\Bar{v}^{\epsilon} \equiv \frac{1}{T}\sum_{t=1}^T \left(\sum_{i=1}^I p_{i,t}R(d_i,\theta_t)\right)$ is an $\epsilon$-approximation to $\Bar{v}$.

\subsection{Theorem \ref{thm:MWU} with approximate evaluation of $f(p)$}\label{sec:approximate}

Suppose $\theta_t$, the  the exact solution of the problem in \eqref{eqn:f_convex}, is not available. Instead, we have access to $\theta_t^{\delta}$ such that \eqref{eq:theta.approximate} holds. The proof of Theorem \ref{thm:MWU} can be modified as follows.  Firstly, in Step 2, we say that
\[\sup_{\theta \in \Theta}\sum_{i=1}^I \Tilde{p}_i^\star R(d_i,\theta) \leq \frac{1}{T}\sum_{t=1}^T \left(\sum_{i=1}^I p_{i,t} R(d_i,\theta_t)\right).\]
We can change this to
\[\sup_{\theta \in \Theta}\sum_{i=1}^I \Tilde{p}_i^\star R(d_i,\theta) \leq \frac{1}{T}\sum_{t=1}^T \left(\sum_{i=1}^I p_{i,t}R(d_i,\theta_t^\delta) + \delta\right).\]
Then, by Step 1, this is bounded by above by
\[\frac{1}{T}\sum_{t=1}^T \sum_{i=1}^I p_{i,t} R(d_i,\theta_t^\delta) + \frac{\epsilon}{2} + \frac{\ln(I)}{T}\left(\frac{M^2}{\epsilon}\right) + \delta.\]
Then, subbing in to Step 2, the left hand side is bounded by above by
\[\Bar{v} + \frac{\epsilon}{2} + \frac{\ln(I)}{T}\left(\frac{M^2}{\epsilon}\right) + \delta.\]
Therefore, choosing $T$ as we have done gives an $\epsilon+\delta$ approximation.

\section{Connection to S Games}
\label{sec:S_games}

The minimax problem we study is closely related to what \cite{blackwell1954} call an $S$-game. Player I, the statistician in our case, has a finite number of pure strategies $d \in \mathcal{D} \equiv \{d_1,\ldots, d_I\}$. Player II, nature, may have infinitely many pure strategies. For each $\theta \in \Theta$, the strategy of nature can be represented as a vector in $\mathbb{R}^I$, $s(\theta) = (R(d_1, \theta), R(d_2, \theta), ..., R(d_I, \theta))$. Define 
\begin{equation}
\label{eqn:S_games_defS}
    S := \{ \left(R(d_1, \theta), R(d_2, \theta), ..., R(d_I, \theta)\right) \in \mathbb{R}^I| \theta \in \Theta\} 
\end{equation}
and $M(i, s) = s_i$, then $\Gamma_{P} = (\mathcal{D}, S, M)$ is a $S$-game with payoff matrix $M(i,s)$. The index $P$ stands for \emph{pure} as we are only considering pure strategies. The mixed extension of the $S$-game is equivalent to $\Gamma_{m} = (\Delta, R, M)$, where $\Delta$ is the set of discrete probability distribution over $\mathcal{D}$, and $R$ is the set of all countable convex linear combination of points in $S$. When $S$ is closed and convex, our minimax problem is exactly solving for the best mixed strategy for player I.
 
\cite[Theorem 2.4.2]{blackwell1954} states that i) Every $S$-game has a value, and the first player has a good (a minimax) strategy; and  ii) If $S$ is closed and convex, player II has a pure good strategy. 

Following these results, we have that i) there exists $\Bar{v}$, 
\begin{equation}
\label{eqn:minmax_thm_mixed}
    \inf_{p \in \Delta(D)} \sup_{q \in \mathcal{P}(\Theta)} \int_{\Theta} R(p, \theta) d q(\theta) = \sup_{q \in \mathcal{P}(\Theta)} \inf_{p \in \Delta(D)} \int_{\Theta} R(p, \theta) dq(\theta) = \Bar{v}, 
\end{equation}
where $\mathcal{P}(\Theta)$ is the set of all mixed strategies of the nature. And, there exists \emph{minimax} decision rule $p^* \in \Delta(D)$ such that 
\begin{equation*}
    \sup_{\theta \in \Theta} R(p^*, \theta) = \Bar{v}.
\end{equation*}
ii) If the set S defined in \eqref{eqn:S_games_defS} is convex and closed, there exists exactly one $\theta^*$ that is \emph{maximin} strategy for nature, i.e.,  
\begin{equation*}
    \inf_{p \in \Delta(D)} R(p, \theta^*) = \Bar{v}.
\end{equation*}
Otherwise, the results in \cite{blackwell1954} show that that the \emph{maximin} strategy for nature is supported on at most $I$ points. 

\subsection{Maximin Strategy (least-favorable distribution)}
Our proof for Theorem~\ref{thm:MWU} also gives a surprising result: when the game has a value, i.e., \eqref{eqn:minmax_thm_mixed} holds, we can derive approximate max-min strategy for the nature. 
\begin{definition}
For simplicity, we denote 
\begin{equation*}
    R(p, q) := \int_{\Theta} R(p, \theta) d q(\theta).
\end{equation*}
    A distribution $q^*_{\epsilon} \in \mathcal{P}(\Delta)$ is called an ``$\epsilon$-maximin" strategy for the game $(\Delta(\mathcal{D}), \mathcal{P}(\Theta), R(\cdot, \cdot))$ with value $\bar{v}$ if 
    \begin{equation*}
        \inf_{p \in \Delta(D)} R(p, q^*_{\epsilon})
    \geq \sup_{q \in \mathcal{P}(\Theta)} \inf_{p \in \Delta(D)} R(p,q) - \epsilon = \Bar{v} - \epsilon. 
    \end{equation*}
\end{definition}
In our proof, we derived an intermediate result that \eqref{eqn:aux_step_1}, 
\begin{equation*}
    \frac{1}{T}\sum_{t=1}^T \left(\sum_{i=1}^I p_{i,t} R(d_i,\theta_t)\right) \leq \frac{1}{T}\sum_{t=1}^T \sum_{i=1}^{I} p_i R(d_i,\theta_t) + M^2 \eta + \frac{\ln(I)}{T \eta}, \forall p \in \Delta(\mathcal{D})
\end{equation*}
Recall assumption~\ref{asn:A2_oracle} states that for all $t$, $\sum_{i=1}^{I} p_{i,t} R(d_i, \theta_t) = \sup_{\theta \in \Theta} R(p_t, \theta)$, so 
\begin{equation*}
     \frac{1}{T}\sum_{t=1}^T \left(\sum_{i=1}^I p_{i,t} R(d_i,\theta_t)\right) 
     = \frac{1}{T}\sum_{t=1}^T \sup_{\theta \in \Theta} R(p_t, \theta)
     \geq \inf_{p \in \Delta(\mathcal{D})} \sup_{\theta \in \Theta} R(p, \theta) = \Bar{v},
\end{equation*}
we get 
\begin{equation*}
    \Bar{v} \leq \frac{1}{T}\sum_{t=1}^T\sum_{i=1}^{I} p_i R(d_i,\theta_t) + \frac{M^2 \eta}{2} + \frac{\ln(I)}{T \eta} , \forall p \in \Delta(\mathcal{D})
\end{equation*}
By taking $\eta = \epsilon/M^2, T = \lceil 2M^2 \ln(I) / \epsilon^2 \rceil$, we get 
\begin{equation*}
    \Bar{v} - \epsilon \leq \frac{1}{T}\sum_{t=1}^T\sum_{i=1}^{I} p_i R(d_i,\theta_t), \forall p \in \Delta(\mathcal{D}). 
\end{equation*}
Now, if we choose $q^*_{\epsilon}$ to be a discrete distribution that 
\begin{equation*}
    q^*_{\epsilon}(\theta) = \frac{|\{t \in [T]: \theta_t = \theta\}|}{T}, 
\end{equation*}
Then, 
\begin{equation*}
    \inf_{p \in \Delta(\mathcal{D})} R(p, q^*_{\epsilon}) = \inf_{p \in \Delta(\mathcal{D})} \frac{1}{T}\sum_{t=1}^T\sum_{i=1}^{I} p_i R(d_i,\theta_t) \geq \Bar{v} - \epsilon = \sup_{q \in \mathcal{P}(\Theta)} \inf_{p \in \Delta(\mathcal{D})} R(p,q) - \epsilon, 
\end{equation*}
which means $q^*_{\epsilon}$ is an ``$\epsilon$-maximin" strategy for the nature. 
\end{document}